\documentclass{article}

\usepackage{amsfonts}       

\usepackage{comment}
\usepackage{caption}
\usepackage{subcaption}

\usepackage{amsmath}

\usepackage{amsthm}
\usepackage{amsfonts}
\usepackage{amssymb}
\usepackage{bbold}

\usepackage{mathtools}
\DeclarePairedDelimiter{\ceil}{\lceil}{\rceil}

\newtheorem{thm}{Theorem}[section]
\newtheorem{cor}[thm]{Corollary}
\newtheorem{lem}[thm]{Lemma}

\newcommand{\RR}{ \mathbb{R} }
\newcommand{\NN}{\mathbb{N}}

\newcommand{\spaceo}{\hspace{2 mm}}
\newcommand{\setsep}{ \spaceo | \spaceo}
\newcommand{\half}{\frac{1}{2}}

\newcommand{\Prob}[1]{\mathbb{P}\left( #1 \right)}
\newcommand{\Probu}[2]{\mathbb{P}_{#1}\left( #2 \right)}

\newcommand{\Abs}[1]{\left| #1 \right|}
\newcommand{\Set}[1]{\left\{ #1 \right\}}
\newcommand{\Brack}[1]{\left( #1 \right)}
\newcommand{\BrackSq}[1]{\left[ #1 \right]}
\newcommand{\inner}[2]{\left< #1 , #2 \right>}

\newcommand{\Exp}[1]{ \mathbb{E} #1}
\newcommand{\norm}[1]{\left\|#1\right\|}

\newcommand{\eps}{\varepsilon}
\newcommand{\exponent}[1]{exp\Brack{#1}}

\newcommand{\dbar}[1]{\bar{\bar{#1}}}

\newcommand{\groundX}{\mathbb{X}}

\newcommand{\grmeassingle}{\Delta_{\groundX}}
\newcommand{\grmeas}{\Delta_{\groundX \times \groundX}}
\newcommand{\sqmeas}{\Delta_{S' \times S'}}
\newcommand{\sinmeas}{\Delta_{S'}}

\begin{document}

\title{Clustering Time Series and the Surprising Resilience of HMMs}

\author{Mark Kozdoba, Shie Mannor}
\date{}

\maketitle

\begin{abstract}
Suppose that we are given a time series where consecutive samples are 
believed to come from a probabilistic source, that the source changes
from time to time and that the total number of sources is fixed. 
Our objective is to estimate the distributions of the sources.
A standard approach to this problem is to model the data as a 
hidden Markov model (HMM). However, since the data often lacks the Markov or the 
stationarity properties of an HMM, one can ask whether this approach is still suitable
or perhaps another approach is required. In this paper we show that a maximum likelihood HMM
estimator can be used to approximate the source distributions in a much larger class of models
than HMMs. Specifically, we propose a natural and fairly general non-stationary 
model of the data, where the only restriction is that the sources do not change too often. Our main result 
shows that for this model, a maximum-likelihood HMM estimator produces
the correct second moment of the data, and the results can be extended to higher moments. 
\end{abstract}

\section{Introduction}
Let $x = (x_1,x_2,\ldots,x_N)$ be a sequence of symbols over some 
alphabet $\groundX$, where each symbol is sampled from one of 
$k$ sources, with distributions $\mu_1,\ldots,\mu_k$. Given the sequence $x$, consider the problem of inferring the distribution of the sources, and of 
the classification of the samples, in the sense of determining for each sample $x_i$  which source 
produced it. A well known toy instance of this problem is the ``dishonest 
casino'', where the sources are biased coins, see \cite{Durbin98biologicalsequence}. A classical real world 
application is in speech recognition, see \cite{hmmspeach}. In general, applications appear in virtually any field 
involving time 
series or sequential data. For instance, in financial times series \cite{hmmfinance},  
biological sequence analysis \cite{hmmbiology1},\cite{hmmbiology},\cite{Durbin98biologicalsequence}, computer 
vision \cite{hmmvision}, and climate modelling
\cite{changepoint_climate}. See also \cite{hmmactivity1},\cite{hmmactivity2}, \cite{changepoint_security}, 
\cite{changepoint_deform} for a variety of other applications. Further, the above problem setting can also be 
viewed as a change-point detection problem, \cite{changepointbook} (see also  
\cite{hmm_changepoint} for an HMM based framework), and in particular the applications in 
\cite{hmmfinance},\cite{hmmbiology1},\cite{changepoint_climate},\cite{changepoint_deform}, \cite{regshifts},
 and \cite{hmmactivity2} are of this type. 

When modelling the data $x = (x_1,x_2,\ldots,x_N)$, in order to be able to distinguish between the 
sources, one clearly needs some conditions on how the sources change as time progresses. 
Indeed, if the source is chosen independently at each time $i$, it is 
easy to see that the sources are indistinguishable and one 
effectively sees a single source with distribution equal to the 
empirical distribution of the data. A natural assumption on the 
underlying sequence of sources (also referred to as sequence of states) $s =(s_1,\ldots,s_N)$ is that it 
forms a Markov chain, and the resulting model is a Hidden Markov Model (HMM). In particular this 
assumption is made in all of the above mentioned work. 

With the HMM model, given a sequence of data $x$ one can find a maximum likelihood 
HMM, and then, for instance, use the Viterbi sequence (the most likely state sequence $s$ 
given the data) for classification. However, while the Markov chain assumption 
on the state sequence $s$ is convenient, and there exists a variety of effective inference methods for 
the problem, the data source itself rarely satisfies the Markov condition on the sequence of the 
states. Consider for instance the financial time series applications, as considered in 
\cite{hmmfinance}, \cite{changepoint_deform}, \cite{regshifts}. The data is a time series of stock prices or commodity 
value indices, and the underling hidden states reflect the general conditions of the market, such as 
bull or bear markets. If this data is modelled by an HMM, the model will imply that every day there is 
a certain probability that the market will enter a ``bear'' state, and that the expected time the system 
will spend in this state will be inversely proportional to this probability. Moreover, the model will 
imply that this probability does not change from day to day, due to stationarity of the Markov chain. 
Such properties clearly do not hold for real data as the stock markets are notoriously non-stationary. 
As another example, consider the task of monitoring human physical activity during a day (see, for instance, \cite{hmmactivity2}). Suppose that 
different states of the system correspond to different activities, such as walking, climbing stairs, 
running, driving, riding a bicycle. Assume that time steps are seconds, that at each time instance $i$, 
$x_i$ 
corresponds to some set of features produced by the current activity, and that the activities can be 
distinguished based on the distribution of the features. In this situation, it clearly makes little 
sense to assign probabilities to transitions between, say, walking and climbing stairs states, since 
such probability will depend strongly on the environment, will change on different days and during the day, 
and in any case is likely to be too small to be meaningful. More generally, similar considerations apply in many problem instances where HMMs are used as a change-point detection tool. 

In this paper we show that surprisingly, if one wants to learn the distributions of the sources, 
one can largely ignore the issue of modelling the environment, or modelling the transition mechanism between the 
states, under the assumption that the states do not change too often. 
Specifically, we define an Interval Model $I$ of the 
data $x = (x_1,x_2,\ldots,x_N, \ldots)$ to be a finite or infinite sequence of consecutive intervals in 
$\NN$, $I_1,I_2,\ldots \subset \NN$ with a mapping $\tau:\NN \rightarrow \Set{1,
\ldots,k}$ such that for any $i\in I_l$, $x_i$ has distribution 
$\mu_{\tau(l)}$ and all $x_i$ are independent. As mentioned above, in order to be able to 
differentiate between the sources, one must make \textit{some} assumption on how the source to be 
sampled is chosen at each time instance. 
The assumption that we make in this paper is that for every $l$, $\Abs{I_l} \geq m$ for some  $m > 0$. 
This means that once the system enters a certain state, it stays at least $m$ time units in that state. 
To the best of our knowledge, this assumption did not appear in the literature before.
With this assumption, our main result, Theorem 
\ref{thm:full_statement}, states that if $x$ is a sample from the Interval Model, then a maximum 
likelihood HMM estimator for the sequence $x$ will produce an HMM with source distributions that 
approximate the distributions $\mu_1,\ldots,\mu_k$.  In other words, we show that an HMM estimator 
learns the correct source distributions despite the fact that the sequence was not generated by an HMM. We 
refer to this phenomenon as the \textit{resilience} of the HMM. Our result can be viewed as an 
extension of the classical HMM consistency results, \cite{baum1966}, \cite{petrie} as well as an extension of 
the more recent consistency under misspecification results, \cite{mevel}. On the application side, our results 
provide a better theoretical understanding of the methods that are already widely used. 

The Interval Model with the minimal $m$ duration assumption is a fairly general model. 
Indeed, except for the minimal duration $m$ for each 
state, we \textit{make no other assumptions about the transitions} between the intervals. The transitions 
between different states \textit{need not follow any deterministic or probabilistic pattern}, and in 
particular the process $x$ \textit{does not need to be stationary} and moreover, it \textit{does not need to be 
ergodic}. The intervals themselves also can be of arbitrary lengths, provided it is larger than $m$. 
Therefore the Interval Model setting can be best described as partly stochastic and partly adversarial (see 
\cite{adversarial}). The values
$x_i$ are obtained by sampling from the sources, but the choice of changes of the sources can be arbitrary and hence adversarial. 
In addition, we do not require $m$ to be known apriori but the precision of approximation of Theorem 
\ref{thm:full_statement} will grow with $m$, with explicit bounds.  

While in this paper we are concerned with estimating the source distributions, we note 
that once the sources are known, the problem of classifying the data according to the source is relatively 
easy. For instance one could use a sliding window over the data, $w_i = (x_i,\ldots,x_{i+l})$, and 
for each $i$ decide which source is the most likely to produce $w_i$. Note that if the sources are known, 
one can easily compute the length $l$ of the window that is required to distinguish between the sources with 
high probability. Clearly, the more distinct the sources are, the smaller $l$ is required. In cases 
where $l$ is small compared to $m$, it is straightforward obtain guarantees on the accuracy of this method. 
We note that in contrast, the standard HMM decoding approach, the Viterbi sequence, does not in general 
have any guarantees and for the non-stationary Interval Model type data can be significantly inaccurate.

We now proceed to discuss our results in more detail. Consider a sample $x = (x_1,x_2,
\ldots,x_N)$ generated from an Interval Model $I$ as discussed above. We describe the behaviour of a maximum 
likelihood HMM estimator on such a sequence in two stages. First, we show that with high probability, 
there exists an HMM $H_0$ which assigns a high likelihood to the sequence $x$. Specifically, we show that
there is an HMM that assigns log-likelihood 
\begin{equation}
\label{eq:intro_right_likelihood}
 L(H_0,x) = \frac{1}{N} \log \Probu{H_0}{x} \geq -\frac{\log \Brack{ 2k \cdot m}}{m} - \sum_{j\leq k} w_j H(\mu_j)
\end{equation}
to $x$, where $w_j$ is the proportion of indices $i\leq N$ sampled from $\mu_j$ and $H(\mu_j)$ are the 
entropies of the sources. As detailed in the proofs, the term 
$-\sum_{i\leq k} w_i H(\mu_i)$ is the normalized log-likelihood that the model 
$I$ itself assigns to a typical sample $x$, and it represents the 
true likelihood of the data. Therefore $L(H_0,x)$ is a sum of a true likelihood, and an error term 
which decreases with increasing $m$. The log-likelihood 
(\ref{eq:intro_right_likelihood}) is achieved on an HMM that has 
emission distributions $\mu_i$ identical to those of $I$, and the 
probability of a state change in this HMM is of order $\frac{1}{m}$. 

In view of this, the main difficulty resolved in this paper, and the main technical contribution, consists in showing that if a fixed HMM 
$H$ has emission distributions that significantly differ from $\mu_1, \ldots, \mu_k$, then the 
log-likelihood it assigns to $x$ is lower then (\ref{eq:intro_right_likelihood}). We remark that due to 
the the hidden states, the likelihood function of an HMM is a complicated quantity which 
is usually controlled implicitly, see the discussion in 
\cite{douc2011}. On the other hand, in this paper we show that by appropriate use of  
type theories (for both the model $I$ and for a Markov chain) we can give explicit bounds on the likelihood 
for finite $N$. While type theory is a well known information theoretic tool, the particular combination 
of arguments that allows us to control the likelihood of an HMM is new.

To use type theory we will introduce the second moments of the model $I$ and the HMM. Roughly speaking, 
for each $a,b \in \groundX$ and a random vector $X$, the second moment $M_X(a,b)$ is the probability 
that $x_i = a$ and 
$x_{i+1} = b$ averaged over all $i$. The second moment captures a basic temporal structure of the 
process. The main technical result of the paper, Theorem \ref{thm:main_thm}, shows that if the second 
moment of an HMM $H$, denoted $M_H$, differs from the moment for the model $I$, $M_I$, then for most 
samples $x$ from $I$, the likelihood $L(H,x)$ will be low. Combined with additional arguments, 
this will imply that the maximum likelihood HMM will have the correct second moment. 

It is now natural to ask how much information does the second moment $H$ contain about the emission 
distributions $\nu_i$ of $H$? In particular, is it true that if $M_I = M_H$ then 
the model $I$ and $H$ have the same set of emission distributions? 
In general, the answer to this question is negative. Elegant 
counterexamples can be found in \cite{chung1996} (see also 
\cite{AHK12}). However, it is also well known and easy to see that the column space of the second 
moment matrices is spanned by the emission distributions. We will see that a similar statement holds 
for our definitions of moments, which somewhat differ from the classical ones. Therefore, if $M_I$ and 
$M_H$ are known, we can reconstruct the $k$-dimensional subspaces $span\Set{\mu_j} \subset \RR^{|
\groundX|}$ and $span\Set{\nu_j} \subset \RR^{|\groundX|}$ spanned by emissions of $I$ and $H$ 
respectively. Note that in order to specify a measure on $|\groundX|$ points one needs $|\groundX|-1$ 
parameters, but if one knows that the measure belongs to a given $k$-dimensional subspace, then only 
$k-1$ parameters are required. Since $k$ is typically much smaller then $|\groundX|$, this means that 
the second moment contains most of the information about the emissions (consider the case $k=2$ and 
$\Abs{\groundX} = 100$ for the sake of illustration). 

Finally, we note that our approach can extended to moments higher then two. Indeed, the main 
combinatorial tool used in this paper is type theory for second moments of Markov chains as 
developed in \cite{ccc87}, where higher moments analog is also presented. However, all the ideas 
necessary for such an extension are present already in the second moment case and in this paper we 
restrict our attention only to the second moments. 

The rest of this paper is organized as follows: In Section \ref{sec:literaure} we review the literature.
Section \ref{sec:defins} contains the definitions and the statements of the results, as well as a sketch of 
our main technical argument. We conclude by a discussion in Section \ref{sec:discuss}. For clarity of 
presentation, the full proofs are deferred to Section \ref{sec:sup_mat_proofs}.

\section{Related Work}
\label{sec:literaure}
As noted in the Introduction, real data often does not behave as a sequence generated by 
an HMM. Some aspects of this problem may be addressed is via the notion of Hidden \textit{semi-}Markov 
Models (HSMMs, see the survey \cite{hsmm_surv}). HSMM is an extension of an HMM which was developed in recent 
years to overcome a particular issue of state duration. In a Markov process, and hence in an HMM, the 
time the system stays in a given state is always a geometric, memoryless random variable, with an 
expectation that may depend on the state.  In a semi-Markov model, the duration of a stay in a given 
state is allowed to be an arbitrary random variable depending on the current state. 
While HSMMs were shown to be more suitable than the HMMs in a large 
variety of cases, this comes at a cost. Since one can not realistically model arbitrary duration times, 
one can either resort to parametric families of distributions that might be better suited to a 
particular application than the geometric variable, or one may consider arbitrarily distributed 
but bounded duration times. The first option requires expert knowledge of the application domain, while 
the second introduces a huge space of parameters and is still limited in what it can model (due to 
boundedness).  See \cite{hsmm_surv} for a detailed account of the advantages and the issues with HSMMs. 

The approach of this paper provides a different perspective on the issue of duration times. 
Indeed, while an HSMM provides a more general model of transitions between the states of the system than HMM, we 
show that if we want to estimate the source distributions, then 
under Interval Model assumptions we \textit{do not need} to model the transitions between the sates at all, 
and the simple HMM estimator suffices.  This has run time and sample complexity advantages, but more 
importantly  -- we are guaranteed an approximation to the true sources without the need to guess and to model
the transitions between the states.  

It is worth emphasizing that in some situations modelling the transitions is important. For instance, in 
speech recognition certain phonemes are much more likely to occur after certain other phonemes, and this 
transition information is important for the applications. However, in other situations, such as the financial 
time series and human activity series described in the Introduction, 
it is unlikely that there exists any stationary probabilistic model of the transitions. Hence 
it is important to know that an estimation procedure works for any, possibly non-stationary or 
non-probabilistic transition mechanism, as expressed by the Interval Model and guaranteed by our 
results. 

A problem setup somewhat similar to the Interval Model was recently investigated in \cite{khaleghi}, 
in the context of change point detection methods. Similarly to the Interval Model, in the model of 
\cite{khaleghi} the data is composed from intervals, and each interval is generated by one of $k$
sources. Moreover, the sources there can be arbitrary stationary processes, which is significantly more 
general than the independent processes which we consider in this paper. 
However, the results of \cite{khaleghi} hold only in the asymptotic regime where the number of intervals 
is fixed and the number of samples $N$ goes to infinity. This means that the length of each individual 
interval is required to go to infinity with $N$. This makes the problem simpler, since in this regime one can 
essentially learn the source from a single interval. In contrast, in the Interval Model we require the 
intervals to be of minimal length $m$, but we do not require the lengths to go to infinity with $N$, and 
our approximation results hold for any fixed $m \geq 2$. This regime requires the estimator to combine the 
information from \textit{all} the intervals in the sample to estimate the sources, and our approach uses
methods completely different from those of \cite{khaleghi}.

We now turn to a discussion of the literature related to the more technical aspects of this paper.
The classical consistency result for HMMs, \cite{baum1966}, \cite{petrie}, states that  if $x = 
(x_i)_{i=1}^{\infty}$ is an infinite sequence generated by an HMM $H$, and $H_n$ is a sequence of 
maximum likelihood estimators for the growing sequences $(x_i)_{i=1}^{n}$, then $H_n$ converges to $H$ 
with probability 1 (over $x$). A key technical component of these results is an extension of the
Shannon-McMillan-Breiman Theorem. This extension deals with the asymptotic behaviour of the 
likelihood assigned by a given HMM to a sequence generated by a different HMM. The original results were 
formulated and proved for finite state HMMs with a finite value alphabet $\groundX$, and with additional light 
restrictions on $H$. More recently, several results have appeared which extend the consistency theorem to HMMs 
with more general state and value spaces, and investigate the conditions under which such extensions are 
possible. See for instance \cite{legland_mevel}, \cite{douc2011}. 

Our result can also be viewed as an extension of the consistency theorem, but in a different 
direction. We consider only finite state HMMs and finite value spaces $\groundX$, but we 
do not assume that $x$ is generated by an HMM. The study of such questions, known as \textit{misspecification}
results, started only recently. The results in \cite{mevel} characterize the behaviour 
of a maximum likelihood estimator for HMMs when $x$ is generated by general ergodic processes satisfying 
some mixing-type conditions. In particular it is shown that the sequence $H_n$ of maximum likelihood 
estimators converges to an HMM $H$ such that the limiting Kullback-Leibler divergence between $H$ and 
the
process generating $x$ is minimal (see \cite{mevel}, Section IV). However, neither the result nor its 
proof supply any information about what the minimizing $H$ actually is, and are therefore of a limited 
practical value. We note, however, that such a limitation is in fact unavoidable, due to the generality 
of the setup. If all we know about $x$ is that it is generated by a general ergodic process, it is 
unlikely that anything concrete can be said about $H$. On the other hand, in this paper we assume a 
specific structure of the process $x$, namely that it is generated by an interval model, and we show 
that in this case, the source distributions of $H$ approximate those of $I$. Therefore, our result can 
also be viewed as a statement about the properties of the minimizer $H$ for the case when $x$ is 
generated by $I$. 

In all of the above mentioned work on consistency and misspecification, the assumption of ergodicity of the 
process generating $x$ plays a crucial role and the underlying proof methods rely heavily on this assumption. 
It is therefore interesting to note that in this work we do not require the Interval Model to be ergodic. The 
details on the relation between the Interval Model and ergodicity are given in Section \ref{sec:IM_ergodicity}. Here 
we mention that in contrast to the existing methods, our approach \textit{provides inequalities that are valid for 
finite $N$} rather than asymptotic results, which allows us to avoid the global ergodicity assumption and to 
work in the more general adversarial setting. 

 Moments of the data play an important role in our approach.  In recent years, moments of the data
have been used for parameter estimation in various mixture models. For instance, in \cite{AroraBSVD},
\cite{AroraGHMMSWZ13}, it was shown that for several types of mixture models, the underlying 
distributions 
$\mu_j$ can be inferred from the second moment of the data under an ``anchor words'' assumption on  
$\mu_j$s. In \cite{AHK12} it was shown that for a sufficiently large number of samples and under lighter 
assumptions on $\mu_j$, the third moment of the data can be used to reconstruct $\mu_j$ for a variety of 
mixtures, including the HMM. Note that the use of moments in this paper is different. Our estimator is 
the classical maximum likelihood estimator rather than an estimator based on moments. We use moments 
only as a tool to show that properties of the estimator approximate the properties of the true model. 

Finally, we make essential use of type theory for Markov chains. The results we use were obtained in 
\cite{ccc87}, where second order and higher order type theory is developed.

\section{Definitions and Results}
\label{sec:defins}
In Sections \ref{sec:def_res_prelim},  \ref{sec:models} and \ref{sec:moments} we introduce the notions 
necessary to state the results. Section \ref{sec:results} contains the statements and outlines of the proofs. 

\subsection{Preliminaries}
\label{sec:def_res_prelim}
For a finite set $S$, denote by $\Delta_{S}$ the set of all probability measures 
on $S$. For any two probability distributions $\mu,\nu \in \Delta_{\groundX}$, define the entropy and 
 the Kullback-Leibler 
divergence by
\begin{equation}
H(\mu) = -\sum_{a \in \groundX} \mu(a) \log \mu(a) \\
\end{equation}
and 
\begin{equation}
D(\nu|\mu) = \sum_{a \in \groundX} \nu(a) \log \frac{\nu(a)}{\mu(a)}.
\end{equation}
The total variation distance between $\mu,\nu \in \Delta_{\groundX}$ is given by 
\begin{equation}
\norm{\mu - \nu}_{TV} = \sum_{s \in S} \Abs{\mu(s)- \nu(s)}.
\end{equation}

\subsection{Models}
\label{sec:models}
An Interval Model is a tuple 
$I = I\left(\Set{I_l}_{l \in \NN},\Set{\mu_i}_{i=1}^k,\tau, m \right)$, where
$I_l$ is a sequence of consecutive intervals, $I_l = [b_l,e_l] 
\subset \NN$, such that $b_1 = 1$, and $b_{l+1} = e_l + 1$ for all 
$l$,  $\mu_i$ are probability measures on a fixed finite ground set 
$\groundX$, $\tau: \NN \rightarrow \Set{1,\ldots,k}$ is an assignment 
of distributions to intervals, and $m>0$ is such that $|I_l|\geq m$ 
for all $l \in \NN$. We say that a sequence of random variables with 
values in $\groundX$, $X = X_1,X_2, \ldots$ ,
is distributed according to interval model $I$, denoted $X \sim I$, if $X_i$ are 
independent and for every $l \in \NN$ and $i \in I_l$, $X_i$ has 
distribution $\mu_{\tau(l)}$.

For any finite $N$, the let the weights $\Set{w_j}$ be the proportions of each of the states 
$\mu_j$ in the data. Specifically, define 
\begin{equation}
\label{eq:k_j_def}
K_j(N) = \Set{i \leq N \setsep i \in I_l \mbox{\spaceo and \spaceo} \tau(l) = j } 
\end{equation}
to be the set of indices $i \leq N$ such that $X_i \sim \mu_j$ and set 
\begin{equation}
\label{eq:weights_def}
w_j = w_j(N) = \frac{1}{N}\Abs{K_j(N)}.
\end{equation}
Note that $w_i$ depends on $N$. For brevity of the notation this dependence is always assumed but not 
explicitly written. 

Throughout the paper we assume for convenience that $m>2$.

For each time $i \in \NN$ we define $\kappa(i)$ to be the index of the 
distribution of $X_i$, meaning $\kappa(i) = \tau(l)$ where $l$ is such that $i \in I_l$. 

A Hidden Markov Model, HMM, is a tuple $H = H\left(S,\{\nu_i\}_{1}^k, 
\{p_{ij}\}_{i,j=1}^k\right)$ where $S = \Set{1,\ldots,k}$ is a state 
space, $\nu_i$ are corresponding emission probabilities, and 
$p_{ij} = \Prob{S_{t+1} = j | S_t = i}$ for the Markov chain $S_1,S_2,S_3, \ldots$ of the states. 

For a sequence $x = (x_1,\ldots,x_{N+1})$, the log-likelihood of $x$ 
under the HMM $H$ with initial distribution $\pi$ is defined by 

\begin{align}
\label{eq:HMM_x_likelihood}
&L(x,H,\pi) = \\
& =\frac{1}{N+1} \log  \Brack{\sum_{s = s_1,\ldots,s_{N+1}} 
  \pi(s_1) \cdot \prod_{i=1}^N p_{s_{i},s_{i+1}} 
  \prod_{i=1}^{N+1} \nu_{s_i}(x_i)},  \nonumber
\end{align}
where the sum is over all possible paths of length $N+1$ of the underlying Markov 
chain.

\subsection{Moments}
\label{sec:moments}
For a sequence
 $x = (x_1,\ldots,x_{N+1})$, 
the second moment is a probability distribution $M(x) \in \grmeas$, defined by 
\begin{equation}
\label{eq:data_moment_def}
M(x)(a,b) = \frac{1}{N}\Abs{\Set{i\leq N \setsep x_i = a \wedge 
x_{i+1} =b}}
\end{equation}
for all $a,b \in \groundX$.
The second moment describes the frequencies of observing each pair of symbols $a,b$ consecutively.
For a random vector $X=(X_1,\ldots,X_{N+1})$, the second 
moment is the expectation of moments over all realizations of $X$,
\begin{equation}
M_X = \Exp_{x \sim X} M(x).
\end{equation}
For instance, if $X_i$ are independent and have the same distribution $\mu$, then
$M_X(a,b) = \mu(a) \cdot \mu(b)$.  

To obtain an expression for the second moments for interval model $I$, define for fixed $N$
\begin{equation}
\label{eq:I_transition_counts}
c_{rl} = \Abs{\Set{ i < N+1  \setsep \kappa(i) = r \wedge \kappa(i+1) = l  }}.
\end{equation}
$c_{rl}$ counts the transitions from state $r$ to state $l$ in the model, up to time $N+1$.  
Then, if $X =(X_1, \ldots, X_{N+1}) \sim I$, we have
\begin{equation}
\label{eq:I_moment_def}
M_X(a,b) = \frac{1}{N} \sum_{r,l \leq k} c_{rl} \mu_r(a) \cdot \mu_l(b).
\end{equation}

Next, to state our technical result, Theorem \ref{thm:main_thm}, we will require a definition of a \textit{generalized second moment} of an HMM. To motivate this definition, let us first write (\ref{eq:I_moment_def}) in a slightly 
different form. Denote for every $r,l \leq k$, $u_{rl} = \frac{c_{rl}}{N}$ and set 
$u_{r} = \sum_{l\leq k} u_{rl}$.  Then one can write (\ref{eq:I_moment_def})  as
\begin{align}
\label{eq:moment_expanded}
&M_{X}(a,b) =   \\
&\left(\begin{array}{lll}
u_1 \mu_1(a), \ldots, u_k \mu_k(a) 
\end{array}\right)
\left(\begin{array}{lll}
\hdots & \hdots & \hdots \\
\hdots & u_{ij} & \hdots \\
\hdots & \hdots & \hdots
\end{array}\right)
\left(\begin{array}{l}
\mu_1(b) \\
\vdots \\
\mu_k(b) 
\end{array}\right).  \nonumber
\end{align}
Equivalently, we have 
\begin{equation}
\label{eq:generalized_moment_motive}
M_{X}(a,b) = \phi_a \cdot U \cdot \chi_b,
\end{equation}
where $\phi_a = (u_1 \mu_1(a), \ldots, u_k \mu_k(a))^{T} \in \RR^k$, $\chi_b = (\mu_1(b), \ldots, 
\mu_k(b)) \in \RR^k $ and $U$ is the $k \times k$ matrix $U = (u_{ij})$. 

Now, given an HMM $H = H(S,\{\nu_i\}_{1}^k, 
\{p_{ij}\}_{i,j=1}^k)$, and a set of \textit{arbitrary} vectors $\phi 
= \{\phi_a\}_{a \in \groundX} \in \RR^k$, we define the 
\textit{generalized second moment} of $H$ as a matrix 
$M_{\phi,H} \in \RR^{\Abs{\groundX} \times \Abs{\groundX}}$ given by
\begin{equation}
\label{eq:generalized_moment_def}
M_{\phi,H}(a,b) = \phi_a \cdot p \cdot \chi_b,
\end{equation}
where analogously to (\ref{eq:generalized_moment_motive}) we have $\chi_b = (\mu_1(b), \ldots, 
\mu_k(b)) \in \RR^k $, but $\phi_a$ are arbitrary. The reasons for requiring this definition will become apparent during the proof of Theorem \ref{thm:main_thm}.

We call a set of vectors $\phi 
= \{\phi_a\}_{a \in \groundX}$ as above \textit{proper} if all the 
entries of all $\phi_a$ are non-negative, and 
\begin{equation}
\sum_{a \in \groundX} \sum_{j\leq k} \phi_a(j)  = 1. 
\end{equation}
If $\phi$ is a proper system, define 
a probability measure $d_{\phi}$ on $\groundX$ by 
\begin{equation}
d_{\phi}(a) = \sum_{j\leq k} \phi_a(j). 
\end{equation}

We conclude this section by stating the connection between column spaces of $M_X$ , $M_{\phi,H}$, 
and spaces spanned by $\{\mu_j\}_{j\leq k}$ and $\{\nu_j\}_{j\leq k}$ respectively. Note that for any 
matrix $M$, the column space of $M$ coincides with the image of $M$, $Im(M)$ as an operator 
$\RR^{\groundX} \rightarrow \RR^{\groundX}$.

\begin{lem} 
\label{lem:measures_span_moments}
\begin{enumerate}
\item
If $X \sim I$ for an interval model $I$. Then $Im(M_X) \subset span\{\mu_j\}_{j\leq k}$.
\item
For an HMM $H$ and an arbitrary set $\{\phi_a\}_{a \in \groundX}$, $Im(M_{\phi,H}) \subset span\{\nu_j\}_{j\leq k}$.
\end{enumerate}
\end{lem}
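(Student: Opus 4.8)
The plan is to prove both parts by a single observation: each moment matrix is built from a bilinear form in which the emission distributions occupy the right-hand vectors $\chi_b$, so every column of the matrix is automatically a linear combination of those emissions. Concretely, I would record the two matrices in the common form $M(a,b) = \phi_a \cdot A \cdot \chi_b$, where for $M_X$ we have $A = U$ and $\chi_b = (\mu_1(b),\ldots,\mu_k(b))$, and for $M_{\phi,H}$ we have $A = p$ and $\chi_b = (\nu_1(b),\ldots,\nu_k(b))$, the vector of emission probabilities of $H$. In both cases $\phi_a \in \RR^k$ and $A$ is a fixed $k\times k$ matrix, and the only structural fact I will use is that the dependence of $M(a,b)$ on the second argument $b$ enters solely through $\chi_b$.

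First I would fix $a$ and rewrite the $a$-th column. Expanding the bilinear form and grouping the double sum so that the emission index $l$ is the outer one gives $M(a,b) = \sum_{l\le k} (A^{\top}\phi_a)_l\, \chi_b(l)$, so that the column $M(a,\cdot) = (M(a,b))_{b\in\groundX}$, viewed as a vector indexed by its second argument $b$, equals $\sum_{l\le k} (A^{\top}\phi_a)_l\, \rho_l$, where $\rho_l$ denotes the $l$-th emission ($\rho_l = \mu_l$ in part~1 and $\rho_l = \nu_l$ in part~2). Thus every column lies in $span\{\rho_l\}_{l\le k}$. Since by the remark preceding the lemma the column space of a matrix coincides with its image $Im(\cdot)$, this yields $Im(M_X)\subseteq span\{\mu_j\}_{j\le k}$ for part~1 and $Im(M_{\phi,H})\subseteq span\{\nu_j\}_{j\le k}$ for part~2, which is exactly the claim.

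The one point that needs care, and the only place the two parts differ, is part~2, where the vectors $\phi_a$ are arbitrary. I would emphasise that this arbitrariness is harmless: in $\sum_{l\le k}(A^{\top}\phi_a)_l\,\nu_l$ the vectors $\phi_a$ enter only through the scalar coefficients $(p^{\top}\phi_a)_l$ and never through the generating vectors $\nu_l$ themselves, so the image is pinned inside $span\{\nu_j\}_{j\le k}$ no matter how $\phi$ is chosen. This is precisely why the generalized moment is defined with the emissions placed in $\chi_b$ rather than in $\phi_a$. For part~1 the same computation applies verbatim with $\rho_l=\mu_l$; here one could alternatively note that the emissions appear in both factors of $M_X$, so the conclusion is insensitive to which side one spans over, but spanning over $\chi_b$ treats both parts uniformly. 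I expect no substantial obstacle beyond this bookkeeping: the statement is linear-algebraic and follows once the emission index is isolated as the outer summation variable and one checks that $A$ (whether $U$ or $p$) contributes only scalar coefficients and cannot enlarge the span beyond $span\{\rho_l\}_{l\le k}$.
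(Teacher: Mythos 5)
Your proposal is correct and takes essentially the same route as the paper: the paper writes $M_X = \frac{1}{N}\sum_{r,l} c_{rl}\, \mu_r \otimes \mu_l$ and $M_{\phi,H} = \sum_{i,j} p_{ij}\, \hat{\phi}_i \otimes \nu_j$ (where $\hat{\phi}_i(a) = \phi_a(i)$) and observes that each rank-one summand has image spanned by the right-hand emission vector, which is exactly your point that the dependence on the second argument enters only through $\chi_b$. Your column-by-column computation $M(a,\cdot) = \sum_{l} (A^{\top}\phi_a)_l\, \rho_l$ is the same linear-algebraic fact written in coordinates rather than as a tensor decomposition.
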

The proof is given in Section \ref{sec:proof_lem_meas_span_moments}.

Finally, for any $M \in \grmeas$, define the left and right marginalizations, 
$\bar{M},\dbar{M} \in \grmeassingle$ by 
\begin{equation}
\bar{M}(a) = \sum_{b \in \groundX} M(a,b) , \spaceo \dbar{M}(a) = \sum_{b \in \groundX} M(b,a).
\end{equation}

\subsection{Results}
\label{sec:results}
As discussed in the Introduction, the first part of argument consists in showing that there is an HMM 
$H$ that assigns high likelihood to most of the samples from $I$. This is formalized in the following 
Lemma.

\newcommand{\Nmin}{N_{min}}

Given an Interval Model $I = I\left(\Set{I_l}_{l \in \NN},\Set{\mu_i}_{i=1}^k,\tau, m \right)$,
for any $N>0$ we define 
\begin{equation}
N_{min} = \min_{j \leq k} w_j \cdot N,
\end{equation}
with $w_j$ as defined in (\ref{eq:weights_def}). 

\begin{lem}
\label{lem:right_path_full}
For any set of probability distributions $\Set{\mu_j}_{j=1}^k$, there exists a function $\eps : \NN 
\rightarrow \RR$ such that $\lim_{N \rightarrow \infty} \eps(N) \rightarrow 0$
and such that the following holds: \newline
For any Interval Model $I = I\left(\Set{I_l}_{l \in \NN},\Set{\mu_i}_{i=1}^k,\tau, m \right)$, 
there is an HMM $H$ and an initial 
distribution $\pi$, such that for every $N >0$, if $X=(X_1,\ldots,X_N) \sim I$ then   
with probability at least $1 - \eps(\Nmin)$, 
\begin{equation}
\label{eq:lem_high_likelihood}
L(X,H,\pi) \geq - \frac{\log 2km}{m} - \sum_{j} w_j H(\mu_j) - \eps(\Nmin).
\end{equation}
\end{lem}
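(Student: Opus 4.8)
The plan is to exhibit the HMM explicitly and to lower-bound its likelihood by the contribution of a single, well-chosen state path. I would take $H = H(S,\{\nu_i\},\{p_{ij}\})$ with $S = \{1,\ldots,k\}$, emission distributions equal to those of the interval model, $\nu_i = \mu_i$, and a transition matrix that strongly favors staying put: self-loop probability $p_{ii} = 1 - \frac{1}{m}$, with the remaining mass $\frac{1}{m}$ spread over the other states (the precise split can be tuned so that the deterministic estimate below matches the stated constant). Take $\pi$ uniform on $S$. Since the log-likelihood (\ref{eq:HMM_x_likelihood}) is the logarithm of a sum of nonnegative path-terms, it is bounded below by the logarithm of the single term corresponding to the ``true'' path $s^* = (\kappa(1),\ldots,\kappa(N))$ that tracks the source indices of $I$. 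This gives
\[
L(X,H,\pi) \geq \frac{1}{N}\log\pi(\kappa(1)) + \frac{1}{N}\sum_{i=1}^{N-1}\log p_{\kappa(i),\kappa(i+1)} + \frac{1}{N}\sum_{i=1}^{N}\log\mu_{\kappa(i)}(X_i),
\]
and I would bound the three terms separately.

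The first term is deterministic and negligible: $\frac1N\log\pi(\kappa(1)) \geq -\frac{\log k}{N} \to 0$, to be absorbed into $\eps$. The second (transition) term is also deterministic, and this is where the duration assumption enters. Since every interval satisfies $\Abs{I_l}\geq m$, the number of intervals meeting $\{1,\ldots,N\}$ is at most $N/m$, so $s^*$ changes state fewer than $N/m$ times and stays put at least $N(1-\frac1m)$ times. Substituting the two transition values, using $\Abs{I_l}\geq m$ and $m>2$ to control $\log(1-\frac1m)$, a direct computation yields $\frac1N\sum_{i}\log p_{\kappa(i),\kappa(i+1)} \geq -\frac{\log 2km}{m}$, which is exactly the first term of (\ref{eq:lem_high_likelihood}) and holds for every realization of $X$.

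The third (emission) term is the only stochastic one and is the heart of the argument. Grouping indices by source, $\frac1N\sum_{i=1}^N \log\mu_{\kappa(i)}(X_i) = \sum_{j\le k} w_j \cdot \frac{1}{\Abs{K_j(N)}}\sum_{i \in K_j(N)}\log\mu_j(X_i)$, where for each $j$ the inner average is an empirical mean of $\Abs{K_j(N)} = w_j N$ i.i.d.\ copies of $\log\mu_j(X)$, $X\sim\mu_j$, with expectation exactly $-H(\mu_j)$. Each summand lies in $[\log\mu_{\min},0]$, where $\mu_{\min} = \min\{\mu_j(a) : \mu_j(a)>0\}$ depends only on $\{\mu_j\}$ (values $a\notin\supp\mu_j$ occur with probability zero, so the variable is a.s.\ bounded). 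Hoeffding's inequality then gives, for every $j$ and $t>0$, that the inner average is within $t$ of $-H(\mu_j)$ except with probability at most $2\exp(-2 w_j N t^2/(\log\mu_{\min})^2)$. A union bound over the $k$ sources, using $w_j N \geq \Nmin$, shows that with probability at least $1 - 2k\exp(-2\Nmin t^2/(\log\mu_{\min})^2)$ all $k$ averages are simultaneously within $t$, and on this event $\frac1N\sum_i\log\mu_{\kappa(i)}(X_i) \geq -\sum_j w_j H(\mu_j) - t$ since $\sum_j w_j = 1$.

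It then remains to package the errors into a single universal function $\eps$. Choosing the tolerance as a function of the rarest-source count, e.g.\ $t = \Nmin^{-1/3}$, makes both $t$ and the failure probability $2k\exp(-2\Nmin^{1/3}/(\log\mu_{\min})^2)$ tend to $0$, and both depend on $I$ and $N$ only through $\Nmin$ and otherwise only through $k$ and $\mu_{\min}$, i.e.\ through $\{\mu_j\}$. Setting $\eps(n)$ to be the maximum of $t(n)$, the corresponding failure probability, and the initial-term bound $\frac{\log k}{n}$ gives $\lim_{n\to\infty}\eps(n)=0$ and establishes (\ref{eq:lem_high_likelihood}). The only delicate point is precisely this uniformity: the same $\eps$ must serve every interval model, which is why I would phrase every estimate through the source-independent quantities $\mu_{\min},k$ and the single sample-size parameter $\Nmin$. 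The transition estimate, by contrast, is a purely deterministic consequence of $\Abs{I_l}\geq m$ and needs no probabilistic input, so I do not expect it to be an obstacle.
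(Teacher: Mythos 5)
Your proposal is correct and follows essentially the same route as the paper's proof: the same HMM (emissions $\mu_i$, self-loop probability $1-\frac{1}{m}$, remaining mass spread over the other states), the same lower bound via the single path tracking the true state sequence $\kappa$, and the same deterministic transition-count estimate derived from $\Abs{I_l}\geq m$. The only differences are cosmetic: you take $\pi$ uniform rather than a point mass at $\kappa(1)$ (a $\frac{\log k}{N}$ cost absorbed into $\eps(\Nmin)$), and you replace the paper's bare appeal to the law of large numbers for the emission term with an explicit Hoeffding-plus-union-bound argument, which if anything makes the required uniformity of $\eps$ over interval models more transparent.
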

The proof is given in Section \ref{sec:proof_lem_right_path_full}. We take a moment to discuss 
the particular dependence on $N$ exhibited in the above Lemma. The fact that the error term 
$\eps(\Nmin)$ in (\ref{eq:lem_high_likelihood}) depends on $\Nmin$ rather than $N$ means that 
in order for $\eps(\Nmin)$ to be small, the interval $[1,\ldots,N]$ needs to contain a sufficient number of 
samples from every one of the distributions $\mu_1,\ldots, \mu_k$. As will be evident from the proof, this 
assumption is necessary to obtain (\ref{eq:lem_high_likelihood}). On the other hand, the function $\eps$
is completely determined by the distributions $\Set{\mu_j}_{j=1}^k$. In other words, in order to control 
the error in (\ref{eq:lem_high_likelihood}) for a model $I$, we only need to know its distributions, and in 
particular $\eps$ does not depend on the particular interval structure of the model. 

We are now ready to state the main result of this paper. 
Let $X=(X_1,\ldots,X_N)$ be generated by an Interval Model $I$. For an HMM $H$ define 
\begin{equation}
\label{eq:main_condition_on_phi_a}
D  = D(H) = \inf_{\phi \in P} \norm{M_X - M_{\phi,H}}_{TV} - \frac{3}{m}, 
\end{equation}
where 
\begin{equation}
\label{eq:P_defin}
P = \Set{\phi \setsep \mbox{$\phi$ is proper and } \norm{d_{\phi} - \bar{M}_X}_{TV} \leq  
 \frac{3}{m}}.
\end{equation}
In other words, $D$ measures how well $M_X$ can be approximated by a generalized moment 
$M_{\phi,H}$ where $\phi$ can be any proper system with $d_{\phi}$ close to the marginal $\bar{M}_X$. 
To gain some intuition into this quantity, consider the case where $D$ is small, and  
$M_X$ has the maximal rank, $k$. Then, standard matrix perturbation theory results 
imply that $Im(M_X)$ is close to $Im(M_{\phi,H})$ and hence $span\{\mu_j\}_{j\leq k}$ is close to 
$span\{\nu_j\}_{j\leq k}$ by Lemma \ref{lem:measures_span_moments}.   Note also that the set $P$ in 
(\ref{eq:P_defin}) is 
non-empty. Indeed, the proper system $\phi$ defined in (\ref{eq:generalized_moment_motive}) satisfies 
$d_{\phi} = \bar{M}_X$.

We assume throughout the paper that $D \geq 0$, which amounts to considering only the cases 
where $M_{\phi,H}$ at least somewhat differs from $M_X$.

\begin{thm} 
\label{thm:main_thm}
There is a constant $c>0$ such that for any set of probability distributions $\Set{\mu_j}_{j=1}^k$, there 
exists functions $\eps$ and $r$ such that $\lim_{N \rightarrow \infty} \eps(N) \rightarrow 0$, 
$\lim_{N \rightarrow \infty} r(N) \rightarrow c$ and the following holds: \newline
For any Interval Model $I = I(\Set{I_l}_{l \in \NN},\Set{\mu_i}_{i=1}^k,\tau, m )$, 
and HMM $H = H(S,\{\nu_i\}_{1}^k, \{p_{ij}\}_{i,j=1}^k)$, if $X=(X_1,\ldots,X_N)$ is a sample from $I$ then
with probability at least 
$1 - \eps_{\Nmin}$ 
over $X$, 
for every initial distribution $\pi$, 
\begin{equation}
\label{eq:likelihood_in_main_thm}
L(x,H,\pi) \leq -D^2 - \sum_{j} w_j H(\mu_j) + \eps(\Nmin),
\end{equation}
where $D$ is as defined in (\ref{eq:main_condition_on_phi_a}).
\end{thm}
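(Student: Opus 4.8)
The plan is to fix the HMM $H$, work on a single high-probability ``typical'' event for the sample $X\sim I$, and on that event bound the likelihood deterministically and \emph{uniformly over all initial distributions} $\pi$. Uniformity over $\pi$ is essentially free: since $\sum_{s_1}\pi(s_1)=1$, we have $\Probu{H}{x}=\sum_{s_1}\pi(s_1)\,\Probu{H}{x\mid s_1}\le\max_{s_1}\Probu{H}{x\mid s_1}$, so it suffices to bound the likelihood for each of the $k$ fixed starting states; the choice of initial state is a boundary effect contributing $O(1/N)$ to $\frac{1}{N}\log\Probu{H}{x}$. The core is then a method-of-types upper bound on $\Probu{H}{x}$ for a fixed typical $x$.

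First I would isolate the entropy term. Writing $\frac{1}{N}\log\Probu{H}{x}=\frac{1}{N}\log\Probu{I}{x}+\frac{1}{N}\log\frac{\Probu{H}{x}}{\Probu{I}{x}}$, the first summand is handled by the law of large numbers: because $X\sim I$ has independent coordinates with $X_i\sim\mu_{\kappa(i)}$, the quantity $-\frac{1}{N}\sum_i\log\mu_{\kappa(i)}(x_i)$ concentrates on $\frac{1}{N}\sum_i H(\mu_{\kappa(i)})=\sum_j w_j H(\mu_j)$. This is exactly the ``true likelihood'' term of (\ref{eq:intro_right_likelihood}), and it is here that the dependence on $\Nmin$ rather than $N$ enters: making the per-source averages concentrate requires enough samples from every source, i.e. $\min_j w_j N=\Nmin$ large. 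On the same typical event I would also record the concentration of the empirical second moment and its marginal, $\norm{M(x)-M_X}_{TV}\le\frac{3}{m}$ and $\norm{\bar{M}(x)-\bar{M}_X}_{TV}\le\frac{3}{m}$, which hold since $\mathbb{E}_{X\sim I}M(X)=M_X$ while $M(x)(a,b)=\frac{1}{N}\sum_i\Ind{x_i=a\wedge x_{i+1}=b}$ is an average of bounded, finitely dependent indicators.

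It remains to show $\frac{1}{N}\log\frac{\Probu{H}{x}}{\Probu{I}{x}}\le -D^2+o(1)$, which is the technical heart. I would expand $\Probu{H}{x}=\sum_s\pi(s_1)\prod_l p_{s_l s_{l+1}}\prod_i\nu_{s_i}(x_i)$ and group state paths $s$ by their type: the empirical state--emission distribution $\phi^s$ with $\phi^s_a(j)=\frac{1}{N}\Abs{\Set{i\le N: s_i=j,\ x_i=a}}$, together with the empirical transition distribution $q^s$. Each $\phi^s$ is automatically proper, and its marginal is $d_{\phi^s}=\bar{M}(x)$, which by the previous paragraph lies within $\frac{3}{m}$ of $\bar{M}_X$; hence $\phi^s\in P$. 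Since the number of types is polynomial in $N$, $\frac{1}{N}\log\Probu{H}{x}$ equals, up to $o(1)$, the maximum over types of $\frac{1}{N}\log(\text{number of paths of that type})+\frac{1}{N}\log(\text{per-path weight})$, where the per-path weight is $\exp\big(N[\sum_{ij}q_{ij}\log p_{ij}+\sum_{a,j}\phi_a(j)\log\nu_j(a)]\big)$ and the path count is controlled by the second-order type theory for Markov chains of \cite{ccc87}. The purpose of the generalized second moment is that, after this counting, the per-type exponent relative to the baseline $-\sum_j w_j H(\mu_j)$ is governed precisely by a divergence between the observed moment $M(x)$ and the HMM's induced moment $M_{\phi,H}(a,b)=\phi_a\cdot p\cdot\chi_b$, which is exactly the HMM's predicted frequency of the consecutive pair $(a,b)$ given the empirical state--emission profile $\phi$. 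Bounding this divergence below by squared total variation via Pinsker's inequality, and noting that the sum over paths lets the HMM select the type minimizing the discrepancy, the penalty is at least a constant times $\inf_{\phi\in P}\norm{M(x)-M_{\phi,H}}_{TV}^2$. On the typical event, $\inf_{\phi\in P}\norm{M(x)-M_{\phi,H}}_{TV}\ge\inf_{\phi\in P}\norm{M_X-M_{\phi,H}}_{TV}-\norm{M_X-M(x)}_{TV}\ge\inf_{\phi\in P}\norm{M_X-M_{\phi,H}}_{TV}-\frac{3}{m}=D$, so the penalty is at least $D^2$ (using $D\ge 0$); the Pinsker/type constant is what the theorem records as $r(N)\to c$, and all stochastic fluctuations are folded into $\eps(\Nmin)$.

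The step I expect to be the main obstacle is precisely the type-counting for the coupled object $(\phi^s,q^s)$: the emission weight is a per-position product while the transition weight is a per-pair product, and the two are linked through the fixed data $x$, so the number of paths realizing a given $(\phi,q)$ does not factor. Carrying this out so that the resulting exponent collapses to a divergence between $M(x)$ and $M_{\phi,H}$ --- rather than some finer statistic of $\phi$ --- is what forces the particular, nonstandard definition of the generalized second moment and is where the machinery of \cite{ccc87} does the real work. The remaining bookkeeping (boundary and initial-state effects of order $1/N$ and the cross-source contributions of order $1/m$, the polynomial type count, and the conversion of all fluctuations into $\eps(\Nmin)$) is routine by comparison.
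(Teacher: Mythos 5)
Your proposal has a genuine gap, and it sits exactly where you flagged the ``main obstacle'': there is no deterministic per-sequence bound of the kind you want on a typical event defined only by data statistics (concentration of $-\frac{1}{N}\log \Probu{I}{x}$, of $M(x)$ and of $\bar{M}(x)$), because $\frac{1}{N}\log\Probu{H}{x}$ is simply not a function of those statistics, even approximately. Concrete counterexample: let $\groundX=\Set{0,1}$, $k=4$, all sources $\mu_j$ uniform (any interval structure in which all four sources appear with positive frequency), so that $\sum_j w_j H(\mu_j)=1$, every sequence has $\Probu{I}{x}=2^{-N}$ and hence is AEP-typical, and $M_X=\mu\otimes\mu$ is uniform on $\groundX\times\groundX$. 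Let $H$ be the cyclic HMM with $p_{12}=p_{23}=p_{34}=p_{41}=1$ and emissions $\nu_1=\nu_2=\delta_0$, $\nu_3=\nu_4=\delta_1$. The periodic sequence $x^*=00110011\cdots$ has $M(x^*)=M_X$ up to $O(1/N)$ and the matching marginal, so $x^*$ lies in your typical event; moreover $M(x^*)$ is \emph{exactly} a generalized moment of $H$: taking $\phi_a\equiv(1/8,1/8,1/8,1/8)$, which is proper with $d_\phi=\bar{M}_X$, one checks $M_{\phi,H}(a,b)=1/4$ for all $a,b$. Hence $\inf_{\phi\in P}\norm{M(x^*)-M_{\phi,H}}_{TV}=0$, and your claimed bound (divergence penalty plus entropy baseline) would force $\frac{1}{N}\log\Probu{H}{x^*}\leq -1+o(1)$. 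But $\Probu{H}{x^*}=\pi(1)$ (e.g.\ $1/4$ for uniform $\pi$), so $\frac{1}{N}\log\Probu{H}{x^*}\rightarrow 0$. Perturbing the zero entries of $H$ by $\eta$ only changes this by $O(\eta)$, so the failure is not an artifact of determinism; and Theorem \ref{thm:main_thm} itself is not contradicted, since $x^*$ and its shifts carry total $I$-probability $4\cdot 2^{-N}$. The moral is that the second moment cannot distinguish the periodic $x^*$ from a genuinely random sequence, so no counting argument over hidden paths can ``collapse'' to a bound in terms of $M(x)$ and $M_{\phi,H}$ that holds for every $x$ in a data-defined typical set.

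This is precisely why the paper's proof is structured the other way around. It never bounds individual likelihoods by types: it bounds the $H$-probability of the whole set $O=\Set{x \setsep \norm{M(x)-M_X}_{TV}\leq 3/m}$, namely $\Probu{H,\pi}{O}\leq 2^{-ND^2}$ as in (\ref{eq:entropy_hmm_bound}), by applying the Markov-type theorem of \cite{ccc87} (Lemma \ref{lem:sanov}) to the unfolded chain on $S\times\groundX$ together with the contraction (\ref{eq:T_contraction}) --- a statement that remains true even if sequences like $x^*$ carry essentially all of that mass. The entropy term then enters not through type counting at all, but through the cardinality $\Abs{X^l}\geq 2^{N(\sum_j w_jH(\mu_j)-\eps_N)}$ of the typical set, giving the averaged bound (\ref{eq:average_statement}); finally Markov's inequality converts the average into a high-probability statement by discarding an $H$-dependent exceptional subset of $X^l$ of relative size $2^{-ND^2/2}$, as in (\ref{eq:x_l_proportion_statement}) --- which is exactly where $x^*$ goes. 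Your framework has no mechanism for discarding such sequences, and adding one is not bookkeeping: the exceptional set must depend on $H$, and showing it has small $I$-probability essentially \emph{is} the paper's averaging-plus-Markov argument. A secondary point: Lemma \ref{lem:sanov} bounds the probability that a random walk's empirical pair type falls in a set; it is not a tool for counting hidden paths over a fixed data sequence, so it cannot do the ``real work'' in the coupled $(\phi^s,q^s)$ counting you assign to it.
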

The proof is given in Section \ref{sec:proofs}. Here we briefly describe the main idea of the proof. Fix an Interval Model $I$, an HMM $H$, and $N>0$. Define a neighbourhood 
$U \subset \grmeas$ of $M_X$ by 
\begin{equation}
U = \Set{ M \in \grmeas  \setsep \norm{M - M_X}_{TV} \leq \frac{3}{m}},
\end{equation}
and denote by $O$ the set of all sequences $x = (x_1, \ldots, x_N)$  such that $M(x) \in U$. 
Roughly speaking, the proof of (\ref{eq:likelihood_in_main_thm}) can be seen as a combination of two different 
uses of type theory. First, the type theory of Markov chains can be used to show that if an HMM $H$ satisfies 
(\ref{eq:main_condition_on_phi_a}), then the likelihood  assigned to the set $O$ by $H$ is at most
 $2^{-N D^2}$,
\begin{equation} 
\Probu{H}{O} = \sum_{x \in O} 2^{N L(x,H,\pi)} \leq 2^{-N D^2}.
\end{equation}
On the other hand, type theory for independent sequences together with additional concentration 
results can be used to show that $O$ contains a subset $X^l \subset O$ of size at least 
$2^{N \cdot\Brack{\sum_{j} w_j H(\mu_j)}}$ such that all $x \in X^l$ are equiprobable with respect to $X$ and 
$X^l$ is of nearly full measure, $\Probu{X}{X^l} \geq 1 - \eps$. 
Combing these two statements, one obtains
\begin{equation}
\label{eq:proof_sketch1}
\frac{1}{|X^l|} \sum_{x \in X^l} \Probu{H}{x} \leq \frac{1}{|X^l|} \Probu{H}{O} \leq 
2^{-N \Brack{ D^2  + \sum_{j} w_j H(\mu_j)} }.
\end{equation}
Note that (\ref{eq:proof_sketch1}) is in fact an averaged version of (\ref{eq:likelihood_in_main_thm}). 
The corresponding high probability formulation can be easily obtained via Markov's inequality. 

Finally, we state our main result about the behaviour of the maximum likelihood HMM estimator on samples
of model $I$. For $\delta >0$, let $\mathcal{H}_{\delta}$ be the set of HMMs for which transition and 
emission probabilities are bounded below by $\delta$,
\begin{equation}
\mathcal{H}_{\delta} = \Set{ H(S,\{\nu_i\}_{1}^k, \{p_{ij}\}_{i,j=1}^k)}
\end{equation}
where $\nu_i$ and $p_{ij}$ satisfy 
\begin{equation}
\nu_i(x) \geq \delta \spaceo \forall i\leq k, x \in \groundX \mbox{ and } p_{ij} \geq \delta \spaceo \forall i,j\leq k. 
\end{equation}

In what follows we assume that the HMM guaranteed by Lemma \ref{lem:right_path_full} is in $\mathcal{H}_{\delta}$.  This is equivalent to the following:
\begin{equation}
\label{eq:delta_bound}
\delta \leq \frac{1}{m} \mbox{ and } \delta \leq \mu_j(x) \spaceo \forall j\leq k, x \in \groundX.
\end{equation}

\begin{thm} 
\label{thm:full_statement}
Fix the distributions $\Set{\mu_j}_{j=1}^k$ and $m>0$. For any $\delta$ satisfying (\ref{eq:delta_bound})
 there is a constant $c>0$, a function $r$ such that $\lim_{N \rightarrow \infty} r(N) \rightarrow c$ and 
a sequence $\eps_N$ with $\lim_{N \rightarrow \infty} \eps_N \rightarrow 0$ such that 
the following holds: \newline
Let $X=(X_1,\ldots,X_N)$ be a sample from Interval Model
 $I = I(\Set{I_l}_{l \in \NN},\Set{\mu_i}_{i=1}^k,\tau, m )$. 
Let $H$ be a maximum likelihood estimator in $\mathcal{H}_{\delta}$ for the sequence $X$. Then with probability at least $1 - \eps_{\Nmin}$ over $X$, 
\begin{equation}
\label{eq:full_statement_d_h}
 D(H) \leq \sqrt{\frac{\log 3km}{m}}.
\end{equation}
\end{thm}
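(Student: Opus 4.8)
The plan is to \textbf{sandwich} the likelihood of the maximum likelihood estimator $H$ between the lower bound supplied by Lemma \ref{lem:right_path_full} and the upper bound supplied by Theorem \ref{thm:main_thm}, and then solve the resulting inequality for $D(H)$. First I would invoke Lemma \ref{lem:right_path_full} to obtain an HMM $H_0$ and initial distribution $\pi_0$ with
\[
L(X, H_0, \pi_0) \geq -\frac{\log 2km}{m} - \sum_{j} w_j H(\mu_j) - \eps(\Nmin),
\]
valid with probability at least $1 - \eps(\Nmin)$. By assumption (\ref{eq:delta_bound}) we have $H_0 \in \mathcal{H}_{\delta}$, so since $H$ is a maximum likelihood estimator over $\mathcal{H}_{\delta}$ we get $\max_{\pi} L(X, H, \pi) \geq \max_{\pi} L(X, H_0, \pi) \geq L(X, H_0, \pi_0)$, and the estimator inherits the same lower bound.

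Next I would apply the upper bound of Theorem \ref{thm:main_thm}. For the estimator $H$ and every $\pi$ it gives $L(X, H, \pi) \leq -D(H)^2 - \sum_{j} w_j H(\mu_j) + \eps(\Nmin)$, and in particular this holds for the maximizing $\pi$. Combining the two displays and cancelling the common entropy term $\sum_j w_j H(\mu_j)$ leaves
\[
D(H)^2 \leq \frac{\log 2km}{m} + 2\eps(\Nmin).
\]
Because $\frac{\log 3km}{m} - \frac{\log 2km}{m} = \frac{\log(3/2)}{m} > 0$ is a fixed positive constant while $\eps(\Nmin) \to 0$, for all sufficiently large $N$ the right-hand side falls below $\frac{\log 3km}{m}$, which yields (\ref{eq:full_statement_d_h}). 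The constant $c$ and the function $r$ in the statement are simply inherited from Theorem \ref{thm:main_thm} and play no role in the final inequality.

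The main obstacle, and the only nontrivial point, is that Theorem \ref{thm:main_thm} is stated for a \emph{fixed} HMM $H$ with high probability over $X$, whereas the estimator $H = H(X)$ depends on the sample, so the upper bound cannot be applied to it directly. To close this gap I would establish a uniform-over-$\mathcal{H}_{\delta}$ version of Theorem \ref{thm:main_thm} by a covering argument. The set $\mathcal{H}_{\delta}$ is compact, so I fix a finite $\eta$-net $\{H_1, \ldots, H_M\}$, where $M$ depends on $\delta, \eta, k, \Abs{\groundX}$ but not on $N$; applying Theorem \ref{thm:main_thm} to each $H_i$ and taking a union bound, all $M$ inequalities hold simultaneously with probability at least $1 - M\eps(\Nmin)$, and since $M\eps(\Nmin) \to 0$ this can be folded into a new error function $\eps_N$ together with the failure probability of the lower bound.

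To transfer the inequality from the net to an arbitrary $H \in \mathcal{H}_{\delta}$ I would use two continuity facts. First, because every emission and transition probability is bounded below by $\delta$, a multiplicative perturbation of the parameters of relative size $\eta$ changes each factor in every path-term of $\Probu{H}{x}$ by a bounded multiplicative factor; as there are $O(N)$ such factors, $\log \Probu{H}{x}$ changes by $O(\eta N)$, and after dividing by $N+1$ the change in $L(x, H, \pi)$ is $O(\eta)$, \emph{uniformly in $N$, $x$ and $\pi$}. Second, $D(H)$ is continuous in $H$: the map $(H,\phi) \mapsto M_{\phi,H}$ depends continuously on the parameters of $H$, and $P$ is compact, so the infimum defining $D$ varies continuously and hence uniformly continuously on $\mathcal{H}_{\delta}$. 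Choosing $\eta$ small enough makes both discretization errors smaller than $\frac{\log(3/2)}{2m}$; combined with $\eps(\Nmin) \to 0$, this leaves exactly the slack needed to pass from the constant $\frac{\log 2km}{m}$ to the target $\frac{\log 3km}{m}$ for large $N$. I expect the $N$-independence of the Lipschitz constant for $L$ — which is precisely what the lower bound $\delta$ on the parameters buys us — to be the crux that makes the net argument work.
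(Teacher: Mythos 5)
Your proposal is correct, and its skeleton is the same as the paper's: sandwich the maximum likelihood estimator between the lower bound of Lemma \ref{lem:right_path_full} and the upper bound of Theorem \ref{thm:main_thm}, let the fixed gap $\frac{\log(3/2)}{m}$ between $\frac{\log 2km}{m}$ and $\frac{\log 3km}{m}$ absorb all error terms, and handle the sample-dependence of the estimator by a covering argument, exploiting that the normalized log-likelihood is Lipschitz in a multiplicative (log-coordinate) metric on $\mathcal{H}_{\delta}$ -- which is exactly what the lower bound $\delta$ buys, as you say. The genuine difference is in how the union bound is financed. You take an $\eta$-net whose cardinality $M$ is \emph{independent of} $N$ (possible precisely because the discretization error only needs to beat the constant slack $\frac{\log(3/2)}{2m}$, not vanish with $N$), so you can apply Theorem \ref{thm:main_thm} as a black box and pay only $M\eps(\Nmin)\to 0$. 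The paper instead takes a $1/N$-net, whose size grows polynomially in $N$; a union bound over such a net cannot be paid for by the unquantified failure probability $\eps_{\Nmin}$ of Theorem \ref{thm:main_thm}, so the paper reaches inside that theorem's proof and uses the exponential per-point estimate (\ref{eq:x_l_proportion_statement}), giving failure probability $2^{-N D_0^2/2}$ per net point and the explicit bound (\ref{eq:full_statement_prob_final}). What each buys: your route is more modular and suffices for the qualitative statement; the paper's yields explicit, finite-$N$ probability bounds and the consequent remark that $N$ polynomial in $k$ and $\Abs{\groundX}$ suffices. A point in your favor: you explicitly invoke (uniform) continuity of $D(\cdot)$ on $\mathcal{H}_{\delta}$, which is genuinely needed -- a net point nearest to a ``bad'' $H$ with $D(H)>D_0$ need not itself satisfy (\ref{eq:full_statement_big_d}) -- whereas the paper applies its per-point estimate only to HMMs satisfying (\ref{eq:full_statement_big_d}) and glosses over this transfer. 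One detail worth checking in your argument: since $D$ depends on $N$ through $M_X$, the modulus of continuity must be uniform in $N$; it is, because by the triangle inequality $\Abs{D(H)-D(H')} \leq \sup_{\phi}\norm{M_{\phi,H}-M_{\phi,H'}}_{TV}$ with the supremum over proper systems $\phi$, and $M_X$ cancels from this bound.
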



\begin{cor}
\label{cor:limsup}
Let $X=(X_N)_{N=1}^{\infty}$ be an infinite sample from an infinite Interval Model
 $I = I(\Set{I_l}_{l \in \NN},\Set{\mu_i}_{i=1}^k,\tau, m )$. 
Let $H_N$ be a sequence of maximum likelihood estimators for the sequences $(X_1,\ldots,X_N)$. 
Then with probability $1$,
\begin{equation}
\label{eq:full_statement_d_h_infty}
 \limsup_{N \rightarrow \infty} D(H_N) \leq \sqrt{\frac{\log 3km}{m}}.
\end{equation}
\end{cor}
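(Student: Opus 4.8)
The plan is to deduce the almost-sure statement (\ref{eq:full_statement_d_h_infty}) from the finite-$N$, high-probability bound of Theorem \ref{thm:full_statement} by a Borel--Cantelli argument. Write $B = \sqrt{\frac{\log 3km}{m}}$ for the target constant, and for each $N$ introduce the bad event $A_N = \Set{ D(H_N) > B }$, where $D(H_N)$ is the quantity (\ref{eq:main_condition_on_phi_a}) associated with the maximum likelihood estimator $H_N$ for the prefix $(X_1,\ldots,X_N)$. The desired conclusion is precisely that $\Prob{\limsup_N A_N} = 0$: almost every realization of the infinite sequence $X$ should fall into only finitely many $A_N$, so that $D(H_N) \leq B$ for all large $N$ and hence $\limsup_N D(H_N) \leq B$.

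First I would apply Theorem \ref{thm:full_statement} separately at each $N$. Since $H_N$ is a maximum likelihood estimator in $\mathcal{H}_\delta$ for a prefix of the \emph{same} model $I$, the theorem yields $\Prob{A_N} \leq \eps_{\Nmin(N)}$ for every $N$, where $\Nmin(N) = \min_{j\leq k} w_j(N)\, N$ counts the samples drawn from the rarest source up to time $N$. For an infinite model it is natural to assume that every one of the $k$ sources is sampled infinitely often (if some source disappears after finitely many intervals one reapplies the result to the surviving sources). Under this assumption $\Nmin(N) \to \infty$, and therefore each individual $\Prob{A_N} \leq \eps_{\Nmin(N)} \to 0$.

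The step that remains, and the main obstacle, is to upgrade this pointwise decay to summability, $\sum_N \eps_{\Nmin(N)} < \infty$, so that the first Borel--Cantelli lemma applies and gives $\Prob{\limsup_N A_N}=0$. The statements quoted above only record $\eps_N \to 0$ with no rate, so I would extract from the proof of Theorem \ref{thm:full_statement} the \emph{exponential} decay of $\eps$: there the error is governed by the probability that the empirical moment $M(X)$ leaves the neighbourhood $U$ and by the probability that $X$ misses the typical set $X^l$, and both are bounded by a polynomial factor in $N$ times a term of the form $2^{-c'\Nmin}$ coming from the type-counting and concentration estimates, giving $\eps_n \leq 2^{-c'' n}$ for large $n$. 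The delicate point is then the interplay between this rate and how slowly $\Nmin(N)$ may grow for an adversarial interval structure: summability of $\sum_N 2^{-c''\Nmin(N)}$ requires $\Nmin(N)$ to grow fast enough (for instance at least logarithmically with a sufficient constant, which holds whenever no source becomes pathologically rare, and trivially in the typical case where the frequencies $w_j(N)$ stay bounded below so that $\Nmin(N) \sim c N$). Once the exponential rate and such a growth lower bound are in hand, the series converges, Borel--Cantelli applies, and the conclusion $\limsup_N D(H_N) \leq B$ a.s.\ follows.
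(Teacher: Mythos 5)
Your overall plan---finite-$N$ exponential bounds plus Borel--Cantelli---is exactly what the paper intends: the paper gives no separate proof of Corollary \ref{cor:limsup} at all, only the remark that the error probability in Theorem \ref{thm:full_statement} is essentially exponentially small in $N$, and you are right (and it is to your credit to notice) that the theorem \emph{as stated}, with an unquantified $\eps_N \rightarrow 0$, cannot yield an almost-sure conclusion, so one must reopen its proof and use the explicit bound (\ref{eq:full_statement_prob_final}) together with exponential, Hoeffding-type strengthenings of Lemma \ref{lem:empirical_mx_bound} and Lemma \ref{lem:iid_aep}. Still, two genuine gaps separate what you prove from what the corollary asserts. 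First, you silently re-impose $H_N \in \mathcal{H}_{\delta}$ for a fixed $\delta$, whereas the corollary concerns unrestricted maximum likelihood estimators; indeed the paper advertises the corollary precisely as the place where the $\mathcal{H}_{\delta}$ assumption is removed. Closing this requires, for instance, taking $\delta_N = 1/N$: the $1/N$-net of $\mathcal{H}_{\delta_N}$ then has cardinality at most $c N^{2F}$, still polynomial in $N$, so summability survives the union bound; and mixing an arbitrary HMM with the uniform one at weight $O(1/N)$ places it in $\mathcal{H}_{\delta_N}$ while perturbing both the normalized log-likelihood and $D(\cdot)$ by only $O(1/N)$, which transfers the conclusion from $\mathcal{H}_{\delta_N}$ to the unconstrained maximizer.

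Second, and more seriously, your summability step is bought by \emph{adding hypotheses} that are not in the corollary: that every source recurs infinitely often and that $N_{min}(N)$ grows at least logarithmically with a large enough constant. The Interval Model is adversarial; even when every source appears infinitely often, $N_{min}(N)$ may grow like $\log \log N$, in which case $\sum_N 2^{-c N_{min}(N)} = \infty$ and the first Borel--Cantelli lemma gives nothing, so your argument proves a strictly weaker statement (your parenthetical ``reapply to the surviving sources'' handles sources that vanish entirely, but not sources recurring with vanishing frequency, which is the actual obstruction). The fix is to decouple rare sources rather than wait for their sample counts to grow: fix $\eta > 0$ and split the sources into those with $w_j(N) \geq \eta$ and the rest. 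The rare sources contribute at most $C k \eta$, \emph{deterministically}, to the deviation $\Abs{-\frac{1}{N}\log P_X(X) - \sum_j w_j H(\mu_j)}$, to the likelihood bound of Lemma \ref{lem:right_path_full}, and to $\norm{M(X) - M_X}_{TV}$ (log-probabilities are bounded on a finite alphabet restricted to the supports), while every frequent source has at least $\eta N$ samples, so all concentration estimates hold with error $2^{-c(\eta) N}$, summable in $N$ for each fixed $\eta$ and with no reference to $N_{min}$. Borel--Cantelli then gives, almost surely, $\limsup_N D(H_N) \leq \sqrt{\log (3km)/m} + C'\eta$, and intersecting these probability-one events over $\eta = 1/j$, $j \in \NN$, yields the corollary with no assumptions beyond the definition of the model.
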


We now make a few remarks about the proof. The proof of Theorem \ref{thm:full_statement} is obtained by an 
application of Theorem \ref{thm:main_thm} to an appropriate (multiplicative) $\eps$-net inside the set 
$\mathcal{H}_{\delta}$ and by the union bound.  For this approach to work the log-likelihood 
needs to be a Lipschitz function of the HMM $H$. This is guaranteed by the assumption $H \in 
\mathcal{H}_{\delta}$, with the Lipschitz constant depending on $\delta$. This assumption is 
common in the literature and is used for similar purposes, although it is usually used somewhat differently. 
Moreover, this assumption can be easily removed if we let $N \rightarrow \infty$ in the Theorem statement,
as formalized in Corollary \ref{cor:limsup}. Note that in contrast to existing consistency results, 
since we do not assume ergodicity of $I$, the sequence $H_N$ in the statement of Corollary \ref{cor:limsup}
does not necessarily converge to a limit. 
Nevertheless, inequality (\ref{eq:full_statement_d_h_infty}) holds.
Another remark concerns the magnitude of $N$ required for (\ref{eq:full_statement_d_h}) to hold with high
probability. While the size of $\eps$-net in $\mathcal{H}_{\delta}$ is exponential in 
$k$ and $\Abs{\groundX}$, the probability of error in Theorem \ref{thm:main_thm} is essentially exponentially small 
in $N$. Therefore for the union bound to hold, it suffices for $N$ to be polynomial in $k$ and $\Abs{\groundX}$.
The complete proof is given in Section \ref{sec:proof_of_full_statement}.

\section{Discussion}
\label{sec:discuss}
In this work we considered time series generated by a finite state system, with an assumption that 
the states are somewhat \textit{persistent}, in the sense that the system stays at a state at least 
$m$ time units before the state changes. The advantage of such an 
assumption is that for a variety of systems it may be fairly realistic, and that it is minimal in the 
sense 
that we do not attempt to model the transition mechanism of the system between different states. Indeed, 
we show that for \textit{any} such mechanism, the distributions of the sources can still be 
approximated. An apparent paradox of this result is that we show that the approximation can be done 
using an HMM estimator, and HMM estimator \textit{does} assume a particular transition mechanism between 
the states. The resolution of this paradox, and the reason that the approach works, is that when the 
states have the persistence property, the transition mechanism provably has little influence on certain 
statistics (the likelihood and the moments, for this paper) of the system. 

From a purely technical perspective, one possible extension of this work would be to relax the 
assumption that the system should spend at least $m$ time units at each state. It should be enough for a 
system to 
satisfy this assumption most of the time rather than strictly every time it enters a new state. We 
believe such an extension can be proved using the approach developed in this paper. Another possible 
extension is to replace the discrete emission distributions used in this paper by some continuous 
class, such as the Gaussians. 

From a more general perspective, we believe that the idea of replacing fully generative models by more 
adversarial settings may be extended to other estimation problems. Consider for instance 
topic modelling, where topics are distributions and documents are samples from mixtures of topics. 
By far the most popular generative model used to learn topics is the Latent Dirchlet Allocation, (LDA,
\cite{LDAorig}). LDA proposes a particular mechanism by which the documents are generated from the 
topics. LDA often performs extremely well in practice, despite the fact that real documents 
are clearly not generated by the LDA mechanism. This indicates that, analogously to our HMM results, there 
might exist 
some persistence properties of the topics which would guarantee that the topics are recovered by maximum 
likelihood LDA despite the fact that the documents were not generated by LDA. Identification of such 
persistence properties could contribute, for instance, to model independent (or less model-dependent) 
definitions of topics.

\section{Proofs}
\label{sec:sup_mat_proofs}

\subsection{Interval Model and Ergodicity}
\label{sec:IM_ergodicity}
In this section we discuss the relation between the Interval Model and ergodicity. For the purposes of this discussion, a process $X = (X_1, X_2,\ldots, X_N, \ldots)$ is ergodic if there exists 
a distribution $\mu$ on $\groundX$ such that for every $f : \groundX \rightarrow \RR$, 
\begin{equation}
 \lim_{N \rightarrow \infty} \frac{1}{N} \sum_{i=1}^N f(X_i) = \int f(x) d\mu(x)
\end{equation}
with probability $1$ over $X$. Ergodicity means that space integrals with respect to $\mu$ can be recovered 
by a time average over a single trajectory of the process. If the process $X$ is generated by an interval 
model, then for every $N$ we can write 
\begin{equation}
\label{eq:I_ergodicity}
 \frac{1}{N} \sum_{i=1}^N f(X_i)  = \sum_{j \leq k} w_j \frac{1}{\Abs{K_j}} \sum_{i \in K_j}^N f(X_i), 
\end{equation}
with the sets $K_j$ as defined in (\ref{eq:k_j_def}). Since $X_i$ with $i \in K_j$ are independent, if  $\Abs{K_j} \rightarrow \infty$, then we have 
\begin{equation}
\label{eq:I_ergodicity_single_comp}
\frac{1}{\Abs{K_j}} \sum_{i \in K_j}^N f(X_i) \rightarrow \int f(x) d\mu_j(x)
\end{equation}
by the law of large numbers. Therefore (\ref{eq:I_ergodicity}) converges if and only if for each $j\leq k$,
$w_j(N)$ converges to some limiting value $\hat{w}_j$ with $N \rightarrow \infty$, in which case the 
limiting measure is $\mu = \sum_{j \leq k} \hat{w_j} \mu_j$.  This situation demonstrates well the general 
line of reasoning used in this paper. We will assume that 
$\Abs{K_j}$ is large enough for integrals with respect to each component to converge (this corresponds to large $\Nmin$ in the statements of the results), as in 
(\ref{eq:I_ergodicity_single_comp}), but we will \textit{not} require the more global condition that the 
weights $w_j(N)$ converge.

\subsection{Preliminaries} 
\label{sec:proof_lem_meas_span_moments}

In what follows it will be convenient to use the tensor notation for operators -- for any two vectors $v,w \in 
\RR^{\groundX}$, $v \otimes w$ is a rank 1 linear operator $\RR^{\groundX} \rightarrow \RR^{\groundX}$, 
which acts by $(v \otimes w)(u) = \inner{u}{v} \cdot w$ for all $u \in \RR^{\groundX}$. In particular, for 
$a,b \in \groundX$ we have $\inner{(v \otimes w) \delta_a}{\delta_b} = v(a) \cdot w(b)$.  

For instance, if $X= (X_1, \ldots, X_{N+1})$ all $X_i$ are independent with distribution $\mu$ then 
\begin{equation}
 M_X = \mu \otimes \mu. 
\end{equation}
If $X = (X_1, \ldots, X_{N+1}) \sim I$ is a sample from the interval model, we can write 
\ref{eq:I_moment_def} equivalently as 
\begin{equation}
\label{eq:I_moment_def_tensor}
M_X = \frac{1}{N} \sum_{r,l \leq k} c_{rl} \mu_r \otimes \mu_l.
\end{equation}

An important property of the interval model is that the number of transitions between \textit{different} 
states in the model is small. We formalize it in the following Lemma. 
\begin{lem} For an interval model $I$ and $N>0$, for every $r,l \leq k$, let $c_{rl}$ be the state 
transition counts, as defined in (\ref{eq:I_transition_counts}). Then 
\begin{equation}
\label{eq:small_mixed_moments}
\frac{1}{N} \sum_{r \neq l} c_{rl} \leq \frac{1}{m}. 
\end{equation}
\end{lem}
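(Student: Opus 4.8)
The plan is to interpret the left-hand side combinatorially as a count of \emph{state changes} and then bound this count using the minimal interval length $m$. By the definition of $c_{rl}$ in (\ref{eq:I_transition_counts}), the sum $\sum_{r \neq l} c_{rl}$ counts exactly those indices $i < N+1$ for which $\kappa(i) \neq \kappa(i+1)$, since the double sum over $r \neq l$ simply partitions such indices according to the ordered pair $(\kappa(i),\kappa(i+1))$ of distinct state labels. So it suffices to show that the number of positions $i \leq N$ at which $\kappa$ changes value is at most $N/m$.

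First I would observe that $\kappa$ is constant on each interval: indeed $\kappa(i) = \tau(l)$ for every $i \in I_l$ by definition. Consequently, if $\kappa(i) \neq \kappa(i+1)$ then $i$ and $i+1$ must lie in different intervals, so $i$ is the right endpoint $e_l$ of some interval $I_l$ and $i+1 = b_{l+1}$ is the left endpoint of the next one. In particular every state change occurs at an interval boundary, whence the number of state changes is at most the number of interval endpoints $e_l$ lying in $\{1,\ldots,N\}$.

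Next I would bound the number of such boundaries by a length argument. Let $L$ be the number of endpoints $e_l$ with $e_l \leq N$. Since the intervals are consecutive with $b_1 = 1$ and $b_{l+1} = e_l + 1$, the $L$-th endpoint satisfies $e_L = \sum_{l=1}^{L} |I_l|$. As each interval obeys $|I_l| \geq m$, this gives $e_L \geq Lm$; combined with $e_L \leq N$ we obtain $L \leq N/m$. Therefore $\sum_{r \neq l} c_{rl} \leq L \leq N/m$, and dividing by $N$ yields (\ref{eq:small_mixed_moments}).

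The argument is elementary and I do not anticipate a genuine obstacle. The only points requiring care are the off-by-one bookkeeping (the count ranges over $i < N+1$, so a boundary at position exactly $N$ is included, while an interval straddling $N+1$ contributes no completed boundary), and the observation that not every interval boundary is a state change, since two consecutive intervals may carry the same label $\tau$; this latter fact only strengthens the inequality and is harmlessly absorbed into the bound.
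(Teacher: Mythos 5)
Your proof is correct and takes essentially the same route as the paper: both arguments identify the sum $\sum_{r \neq l} c_{rl}$ with the number of state changes at interval boundaries and bound the number of intervals meeting $\{1,\ldots,N+1\}$ using the minimal length $m$. Your version is in fact slightly more careful in the bookkeeping (summing interval lengths to get $Lm \leq e_L \leq N$ directly), whereas the paper states the bound via $\ceil*{(N+1)/m} - 1$ and leaves the final arithmetic implicit.
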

\begin{proof}
Indeed, since interval length is at least $m$, the set $\Set{1,\ldots , N+1}$ contains at most $\ceil{ 
(N+1)/m}$ different intervals and hence at most $\ceil{ (N+1)/m} - 1$ transitions.
\end{proof}
It follows from (\ref{eq:small_mixed_moments}) that 
\begin{equation}
\label{eq:matrix_small_mixed_moments}
\norm{\frac{1}{N} \sum_{r \neq l } c_{rl} \mu_r \otimes \mu_l}_{TV} = 
\norm{M_X - \sum_{i \leq k} w_i \mu_i \otimes \mu_i}_{TV} \leq \frac{1}{m}.
\end{equation}
We refer to the expression 
\begin{equation}
\label{eq:pure_moment_def}
M_{X,pure} = \sum_{i \leq k} w_i \mu_i \otimes \mu_i
\end{equation}
as a \textit{pure} moment and 
to the expression 
\begin{equation}
\label{eq:mixed_moment_def}
M_{X,mixed} = \frac{1}{N} \sum_{r \neq l } c_{rl} \mu_r \otimes \mu_l
\end{equation}
as a \textit{mixed} moment. The pure
moment captures the contribution to the moment inside each interval, while the mixed moment captures the 
contribution from transitions between the intervals.

Finally, we prove Lemma \ref{lem:measures_span_moments}.
\begin{proof}[Proof of Lemma \ref{lem:measures_span_moments}]
The statement for $M_X$ follows directly from (\ref{eq:I_moment_def}). To show the statement for 
$M_{\phi,H}$, for any $\phi_a \in \RR^k$ and $i \leq k$ let $\phi_a(i)$ be the $i$-th coordinate of 
$\phi_a$.  For every $i \leq k$, define $\hat{\phi}_i \in \RR^{\groundX}$ by $\hat{\phi}_i (a) = \phi_a(i)$. 
Then by the definition, (\ref{eq:generalized_moment_def}), 
\begin{equation}
M_{H,\pi}(a,b) = \sum_{i \leq k} \sum_{j \leq k} p_{ij} \phi_a(i) \nu_j(b),
\end{equation}
and hence 
\begin{equation}
M_{H,\pi} = \sum_{i,j \leq k}  p_{ij} \hat{\phi}_i \otimes \nu_j.
\end{equation}
Since the image of $\hat{\phi}_i \otimes \nu_j$ is spanned by $\nu_j$, it follows that 
$Im(M_{H,\pi}) \subset span\{\nu_j\}_{j=1}^k$.
\end{proof}

\subsection{Proof of Lemma \ref{lem:right_path_full}}
\label{sec:proof_lem_right_path_full}
\begin{proof}[Proof of Lemma \ref{lem:right_path_full}]
Consider an HMM $H$ with $k$ states, $S = \Set{1,\ldots,k}$, with 
emission probabilities equal 
to those of the model $I$, $\mu_i$, and some transition matrix, $p_{ij}$.  In order to show the lower bound 
on $L$, it suffices to consider a single path of the HMM. Let $s=s_1,
\ldots,s_N$ be a sequence of states of $H$ that follows precisely the 
sequence of states in $I$, so that $s_i = \kappa(i)$. Let the initial 
distribution $\pi$ be a delta measure concentrated on the first state 
of $I$, $\kappa(1)$. Recall that the likelihood 
$L(x,H,\pi) = \frac{1}{N} \log \Probu{H,\pi}{x}$ is given by a sum 
(\ref{eq:HMM_x_likelihood}). The contribution of a single path $s$ in
this sum is 
\begin{equation}
\label{eq:true_path_likelihood}
\frac{1}{N} \sum_{i,j \leq k}  c_{ij} \log p_{ij} + 
\frac{1}{N} \sum_{j=1}^k  \sum_{i} \log \mu_j(x_i^j). 
\end{equation}
where $c_{ij}$ are the transition counts of the model $I$, as in 
(\ref{eq:I_transition_counts}), and $x_i^j$ are the entries of $X$, 
rearranged so that for all $i$, $x_i^j$ are entries sampled from $\mu_j$. 
Consider the second term first,
\begin{equation}
\frac{1}{N} \sum_{j=1}^k  \sum_{i} \log \mu_j(x_i^j) = 
 \sum_{j=1}^k  \frac{c_{jj}}{N} \frac{1}{c_{jj}} \sum_{i} \log \mu_j(x_i^j).  
\end{equation}
Clearly, by the law of large numbers,
$\frac{1}{c_{jj}} \sum_{i} \log \mu_j(x_i^j) \rightarrow -H(\mu_j)$
with $\Abs{c_{jj}} \rightarrow \infty$.  

Next, the first term in (\ref{eq:true_path_likelihood}),
\begin{equation}
\frac{1}{N} \sum_{i,k \leq k}  c_{ij} \log p_{ij}
\end{equation}
controls the underlying Markov chain probability of the path $s$. 
Since the total number of transitions between 
different states in the model is small, see (\ref{eq:small_mixed_moments}), this probability is large 
when $p_{ii}$ are close to $1$. In particular, by choosing 
$p_{ii} = 1 - \frac{1}{m}$ and $p_{ij} = \frac{1}{(k-1)m}$ for all 
$i$ and $j \neq i$, we obtain
\begin{eqnarray}
\frac{1}{N} \sum_{i,k \leq k}  c_{ij} \log p_{ij}  =  \\
\frac{1}{N} \sum_{i \leq k}  c_{ii} \log (1 - \frac{1}{m})  + 
\frac{1}{N} \sum_{i \neq j }  c_{ij} \log \frac{1}{(k-1)m}  \geq \\
-\frac{1}{m} - \frac{1}{m^2}   - \frac{1}{m} \log (k-1)m \label{eq:log_passage_lem_right_path} \geq \\ 
- \frac{\log 2km}{m}
\end{eqnarray}
where in line (\ref{eq:log_passage_lem_right_path}) we have used (\ref{eq:small_mixed_moments}) and the fact that 
\begin{equation}
\log ( 1- \frac{1}{m}) \geq -\frac{1}{m} - \frac{1}{m^2} 
\end{equation}
for $m\geq 2$.
\end{proof}

\subsection{Proof of Theorem \ref{thm:main_thm}}
\label{sec:proofs}

Let $x = (x_1, \ldots, x_N)$ be distributed according to an interval model $I$.
We first show that the empirical second 
moment of a sample $x$  is close to its expected second 
moment, $M_X$. 

\begin{lem}
\label{lem:empirical_mx_bound}
Let $X = (X_1, \ldots, X_N)$ be distributed according to the interval model 
$I = I(\Set{I_l}_{l \in \NN},\Set{\mu_i}_{i=1}^k,\tau, m )$. Denote 
\begin{equation}
\label{eq:tilde_w_def}
\tilde{w} = \min_{i \leq k} w_i.
\end{equation}
Then for every $\eps \geq 0$,
\begin{equation}
\label{eq:lem_empitrical_mx_bound}
\Probu{X}{ \norm{M(x) - M_X}_{TV}  \geq \eps + \frac{2}{m}} \leq 
2^{-c_1 \tilde{w}N \cdot \Brack{\frac{\eps^2}{|\groundX|^4}  -  \frac{ c_2 \log k}{\tilde{w}N} }},
\end{equation}
where $c_1,c_2>0$ are absolute constants. 
\end{lem}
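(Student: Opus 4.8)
The plan is to split both the empirical moment $M(x)$ and its expectation $M_X$ into their \emph{pure} and \emph{mixed} parts, as in (\ref{eq:pure_moment_def}) and (\ref{eq:mixed_moment_def}), and to observe that the whole $\frac{2}{m}$ in the statement is contributed by the mixed parts, which can be bounded \emph{deterministically}. Write $M(x) = M(x)_{pure} + M(x)_{mixed}$, where $M(x)_{pure}$ collects the pairs $(x_i,x_{i+1})$ with $\kappa(i)=\kappa(i+1)$ and $M(x)_{mixed}$ the remaining pairs. In any realization the number of mixed pairs is exactly $\sum_{r\neq l} c_{rl}$, so (\ref{eq:small_mixed_moments}) gives $\norm{M(x)_{mixed}}_{TV}\leq \frac1m$ with probability one, and $\norm{M_{X,mixed}}_{TV}\leq\frac1m$ by (\ref{eq:matrix_small_mixed_moments}). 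Since $\Exp{M(x)_{mixed}} = M_{X,mixed}$ and hence $M_X = \Exp{M(x)_{pure}} + M_{X,mixed}$, the triangle inequality yields
\begin{equation}
\norm{M(x) - M_X}_{TV} \leq \norm{M(x)_{pure} - \Exp{M(x)_{pure}}}_{TV} + \frac{2}{m}.
\end{equation}
So it is enough to bound the probability that the purely random fluctuation $\norm{M(x)_{pure} - \Exp{M(x)_{pure}}}_{TV}$ exceeds $\eps$.

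Second, I would split this fluctuation across states. For each $j$ let $\hat P_j$ be the empirical distribution on $\groundX\times\groundX$ of the $c_{jj}$ pure pairs of type $j$; then $M(x)_{pure}=\sum_j u_{jj}\hat P_j$ and $\Exp{M(x)_{pure}}=\sum_j u_{jj}\,\mu_j\tens\mu_j$ with $u_{jj}=\frac{c_{jj}}{N}$, so
\begin{equation}
\norm{M(x)_{pure} - \Exp{M(x)_{pure}}}_{TV} \leq \sum_{j\leq k} u_{jj}\,\norm{\hat P_j - \mu_j\tens\mu_j}_{TV}.
\end{equation}
Because the weights $u_{jj}$ sum to at most one, it suffices to force the event $\norm{\hat P_j - \mu_j\tens\mu_j}_{TV}\leq\eps$ for \emph{every} $j$; this bounds the total by $\eps$ without losing a factor $k$ in $\eps$ (which is why the stated rate involves only $\frac{\eps^2}{\Abs{\groundX}^4}$ and not $\eps/k$). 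The bottleneck is the state with fewest samples: since each interval has length at least $m$, the number of type-$j$ intervals is at most $\Abs{K_j}/m$, so $c_{jj}\geq \Abs{K_j}\Brack{1-\frac1m}\geq \half\Abs{K_j}\geq \half \tilde w N$ for $m\geq 2$. Thus each $\hat P_j$ is built from at least $\frac12\tilde w N$ pairs.

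Third comes the per-state concentration of $\norm{\hat P_j - \mu_j\tens\mu_j}_{TV}$, and here lies the one genuine obstacle: consecutive pure pairs overlap in a coordinate and are therefore dependent. I would remove this by a parity argument: splitting the pure-$j$ pairs according to the parity of their left endpoint $i$ produces two families of \emph{pairwise disjoint} pairs, and since all $X_i$ are independent, within each family the pairs $(X_i,X_{i+1})$ are i.i.d.\ with law $\mu_j\tens\mu_j$. For a fixed entry $(a,b)$ and a fixed parity the count is then a sum of independent Bernoulli$(\mu_j(a)\mu_j(b))$ variables, so Hoeffding's inequality controls the deviation of that entry by $\eps/\Abs{\groundX}^2$ with probability $\exp\Brack{-\Omega\Brack{c_{jj}\,\eps^2/\Abs{\groundX}^4}}$. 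A union bound over the $\Abs{\groundX}^2$ entries and the two parities controls the full $\ell_1$ deviation $\norm{\hat P_j - \mu_j\tens\mu_j}_{TV}$, and a final union bound over the $k$ states, using $c_{jj}\geq\half\tilde w N$ for the worst state, contributes the factor $k$, i.e.\ the additive $\log k$ correction inside the exponent. This reproduces exactly the form $2^{-c_1\tilde w N\Brack{\eps^2/\Abs{\groundX}^4 - c_2\log k/(\tilde w N)}}$ of (\ref{eq:lem_empitrical_mx_bound}).

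The dependence between overlapping pairs, resolved in the third step, is the only real difficulty; the rest is bookkeeping. I would also note that the $\frac2m$ in the statement is deliberate rather than forced — it is matched to the $\frac3m$-scale of the neighbourhood $U$ used in Theorem \ref{thm:main_thm}, so there is no reason to optimize it. Indeed, a single application of McDiarmid's inequality to $M(x)$ viewed as a function of the independent coordinates $X_1,\ldots,X_{N+1}$ (changing one coordinate alters at most four entries by $\frac1N$) would replace $\frac2m$ by an $O\Brack{\Abs{\groundX}^2/\sqrt N}$ bias; the pure/mixed route is preferred here because it reuses the decomposition already in place and keeps the per-state, $\tilde w N$-dependent structure explicit.
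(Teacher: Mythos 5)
Your proposal is correct and follows essentially the same route as the paper's proof: the same pure/mixed decomposition, with the deterministic $\frac{2}{m}$ term coming from (\ref{eq:small_mixed_moments}) and (\ref{eq:matrix_small_mixed_moments}), the same odd/even parity split to restore independence of the overlapping pairs, per-state concentration, and the same final step combining a union bound over states with the observation that the pure part is a convex combination of the per-state deviations. The only difference is the concentration tool: the paper applies the Dvoretzky--Kiefer--Wolfowitz inequality to control the sup-norm deviation over all of $\groundX \times \groundX$ in one shot, whereas you use Hoeffding entrywise plus a union bound over the $\Abs{\groundX}^2$ entries, which adds $2\log\Abs{\groundX}$ alongside the $\log k$ term in the exponent; as written you therefore obtain (\ref{eq:lem_empitrical_mx_bound}) with $\log k$ replaced by $\log\Brack{k\Abs{\groundX}^2}$ --- harmless for every use of the lemma in the paper, but not literally the stated form in which the absolute constant $c_2$ multiplies $\log k$ alone.
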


Before we proceed with the proof, note the particular form of the error, $\eps + \frac{2}{m}$, in 
(\ref{eq:lem_empitrical_mx_bound}). As we show in what follows, the pure component of $M_X$, 
(\ref{eq:pure_moment_def}), can be approximated by the pure component of $M(x)$ to arbitrary precision, 
giving rise to the $\eps$ term in the error. However, the mixed component of $M_X$ will not necessarily 
be approximated well by the mixed part of $M(x)$, but has a small norm, 
(\ref{eq:matrix_small_mixed_moments}) and gives rise to the $\frac{1}{m}$ term in the error. 
To see why the mixed moment of $M_X$ might not be well approximated by the mixed part of $M(x)$, recall 
that we do not place any assumptions on the transitions between different states in the interval model. 
In particular, for $r \neq l$, $c_{rl}$ need not be large and consequently there may not be enough 
samples to recover $\mu_r \otimes \mu_l$.  

\begin{proof}[Proof Of Lemma \ref{lem:empirical_mx_bound}]
We first consider samples from each state $j$ separately and show that they approximate 
$\mu_j \otimes \mu_j$ well. This can be achieved by standard methods if the samples are independent. 
We therefore divide the indices into independent pairs. 
Let 
\begin{equation}
A_j = \Set{ i < N+1 \setsep \kappa(i) = j \mbox{ \spaceo and \spaceo} \kappa(i+1) = j}
\end{equation}
be the set of indices $i$ such that $X_i \sim \mu_j$ and set 
\begin{equation}
B = \Set{1, \ldots, N } \setminus \Brack{\cup_{j \leq k} A_j}
\end{equation}
to be the set of indices where the transitions between intervals occur. 
Divide $A_j$ into a set of odd pairs and even pairs as follows: 
\begin{eqnarray}
A_j^1 = \Set{ (i,i+1) \setsep i \in A_j \mbox{,\spaceo i is odd} }, \\
A_j^2 = \Set{ (i,i+1) \setsep i \in A_j \mbox{,\spaceo i is even} }. 
\end{eqnarray}
For instance, 
if $(1,2,3,4,5,6) \subset A_j$, then $(1,2),(3,4),(5,6)$ are odd 
pairs, and $(2,3),(4,5)$ are even. Then the  
pairs in each $A_j^t$ are mutually independent. Hence they can be 
considered i.i.d samples from the measure $\mu_j \times \mu_j$.  
To estimate how well independent empirical samples of $\mu_j 
\times \mu_j$ approximate $\mu_j \times \mu_j$, we use the 
Dvoretzky Kiefer Wolfowitz inequality, \cite{dkw}, which bounds the $\sup$ distance between the 
empirical and true distribution.  Specifically, if $\mu$ is a probability distribution on a set $S$ and 
$Y_1,\ldots,Y_N$ are independent samples from $\mu$, it follows from the  Dvoretzky Kiefer Wolfowitz inequality that
\begin{equation}
\Prob{ \sup_{s \in S}  \Abs{ \mu(s) - \Brack{\frac{1}{N} \sum_{i \leq N} \delta_{Y_i}}(s) } \geq \eps} \leq 
2 \exponent{ -2 N \eps^2}
\end{equation}
for every $\eps \geq 0$. We apply this with $S = \groundX \times \groundX$, $\mu = \mu_j \times \mu_j$ 
and $Y_i = (X_i,X_{i+1})$. For $a,b \in \groundX$, $j \leq k$ and $t \in \Set{1,2}$, let 
\begin{equation}
R_{j,t}(a,b) = \mu_j(a)\mu_j(b)  
-\frac{1}{|A_j^t|} \sum_{(i,i+1) \in A_j^t} \delta(X_i = a)\cdot\delta(X_{i+1} = b)
\end{equation}
be the difference between the empirical and the true measures. 
Then 
\begin{equation}
\label{eq:dkw_use}
\Prob{ \max_{a,b \in \groundX}\Abs{ R_{j,t}(a,b)  
}  \geq \eps/|\groundX|^2 } \leq 2^{-c |A_j^t| \frac{\eps^2}{|\groundX|^4}}.
\end{equation}
Since $\norm{R_{j,t}}_{TV} \leq \Abs{\groundX}^2 \sup_{a,b} \Abs{R_{{j,t}}(a,b)}$, we obtain
the total variation bound 
\begin{equation}
\label{eq:dkw_tv}
\Prob{ \norm{R_{j,t}}_{TV}
  \geq \eps } \leq 2^{-c |A_j^t| \frac{\eps^2}{|\groundX|^4}}.
\end{equation}
Since $|A_j^t| \approx \half w_j N$, the union bound over all $j,t$ implies that 
\begin{align}
\label{eq:dkw_union}
\Prob{ \max_{j,t} \norm{R_{j,t}}_{TV} 
  \geq \eps } &\leq 2k \cdot 2^{-c \tilde{w}N \cdot\frac{\eps^2}{|\groundX|^4}} \nonumber \\
  &\leq 
  2^{-c \tilde{w}N \cdot \Brack{\frac{\eps^2}{|\groundX|^4}  -  \frac{ c_1 \log k}{\tilde{w}N} }}  
\end{align}
for an appropriate absolute constant $c_1 > 0$.
Finally, note that 
\begin{align}
M(x)(a,b)  = & \frac{1}{N} \sum_{j \leq k} \sum_{t \in \Set{1,2}} \sum_{i \in A_j^t} 
  \delta_{x_i}(a) \delta_{x_{i+1}}(b)  \\
  & + \frac{1}{N} \sum_{i \in B} \delta_{x_i}(a) \delta_{x_{i+1}}(b)  \nonumber \\
   = &  \nonumber  \\
  & \sum_{j \leq k} \sum_{t \in \Set{1,2}}  \frac{|A_j^t|}{N} \frac{1}{|A_j^t|} \sum_{i \in A_j^t} 
    \delta_{x_i}(a) \delta_{x_{i+1}}(b) \nonumber  \\
  & + \frac{|B|}{N} \frac{1}{|B|} \sum_{i \in B} \delta_{x_i}(a) \delta_{x_{i+1}}(b) .\nonumber 
\end{align}
Therefore 
\begin{align}
M(x) - M_X = & \\
 & \sum_{j \leq k} \sum_{t \in \Set{1,2}}  \frac{|A_j^t|}{N} R_{j,t} \nonumber \\
 & - M_{X,mixed} + \frac{|B|}{N} \frac{1}{|B|} \sum_{i \in B} \delta_{x_i}(a). \nonumber
\end{align}
By (\ref{eq:small_mixed_moments}) and (\ref{eq:matrix_small_mixed_moments}), the total variation norm of 
the last two terms is bounded by $\frac{1}{m}$ each.
The first term is a convex combination, and therefore the claim of the lemma follows from 
(\ref{eq:dkw_union}).
\end{proof}

We proceed with the proof of Theorem \ref{thm:main_thm}. Define a set 
\begin{equation}
\label{eq:u_eps_def}
U = \Set{ 
M \in \Delta_{\groundX \times \groundX} \setsep 
   \norm{M - M_X}_{TV} \leq \frac{3}{m}
}.
\end{equation}
Lemma \ref{lem:empirical_mx_bound} (with $\eps = 1/m$) states that $M(x) \in U$ with high probability over $I$.  

Next, fix an HMM $H$ and let $M_H = M_{\phi,H}$ be 
the generalized second moment attaining the infimum in 
(\ref{eq:main_condition_on_phi_a}),
\begin{equation}
\label{eq:m_x_m_h_dist}
D = \norm{M_X - M_H}_{TV} = \inf_{\phi \in P} \norm{M_X - M_{\phi,H}}_{TV} - 3/m.
\end{equation}

Let 
\begin{equation}
O = \Set{x=(x_1, \ldots, x_{N+1}) \setsep M(x) \in U }
\end{equation}
be the set of all data sequences $x$ with $M(x) \in U$. 
Using the type theory for second moments of Markov chains, we will  
show that 
\begin{equation}
\label{eq:entropy_hmm_bound}
\Probu{H,\pi}{ O } \leq 2^{-N D^2},
\end{equation}
for every initial distribution $\pi$, where $D$ is given by (\ref{eq:m_x_m_h_dist}).
 Equivalently, 
under $H$, probability of observing a sequence $x$ with $M(x) \in 
U$ is at most $2^{-N D^2}$. We first prove
Theorem \ref{thm:main_thm} assuming (\ref{eq:entropy_hmm_bound}), and 
then prove (\ref{eq:entropy_hmm_bound}).

The inequality (\ref{eq:entropy_hmm_bound}) bounds the likelihood 
under $H$ of \textit{all} $x$ such that $M(x) \in U_{\eps}$.  To 
prove Theorem \ref{thm:main_thm}, we need to 
bound the likelihood $L(x,H,\pi)$ of \textit{individual} sequence $x$ 
produced by $I$. Let 
\begin{equation}
H(X) = \frac{1}{N} H(X_1, \ldots, X_N) =  \sum_{i \leq k} w_i H(\mu_i) 
\end{equation}
be the entropy of a sample $X = (X_1, \ldots, X_N)$ from $I$. 

The following lemma describes the type theory for for samples from model $I$. Recall that $\tilde{w}$ 
was defined in (\ref{eq:tilde_w_def}) and has the property that in a sample $X = (X_1,\ldots,X_N)$  from 
$I$ every source appears at least $\tilde{w} N$ times. 
\begin{lem} 
\label{lem:iid_aep}
Let $X = (X_1,\ldots,X_N) \sim I$. Then there exist a subset $G \subset \groundX^N$ of sequences such that 
\begin{enumerate}
\item
For every $x^l= (x_1,\ldots,x_N) \in G$,
\begin{equation}
\label{eq:xl_close_to_exp}
\Abs{-\frac{1}{N} \log P_X(x^l) - H(X)} \leq \eps_N,
\end{equation}
\item
\begin{equation}
\label{eq:xl_full_prob}
\sum_{x^l \in G} P_X(x^l) \geq 1 - \eps_N, 
\end{equation}
\end{enumerate}
where $\eps_N \rightarrow 0$ with $\tilde{w}N \rightarrow \infty$. 
\end{lem}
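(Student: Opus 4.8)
The plan is to recognize Lemma \ref{lem:iid_aep} as a weighted asymptotic equipartition property and to prove it by a per-source concentration argument combined with a union bound over the $k$ sources. Since the $X_i$ are independent under $I$, the likelihood factorizes as $P_X(x) = \prod_{i=1}^N \mu_{\kappa(i)}(x_i)$, so grouping the factors according to which source produced them gives
\[
-\frac{1}{N} \log P_X(x) = \sum_{j \leq k} w_j \cdot \frac{1}{\Abs{K_j}} \sum_{i \in K_j} \Brack{ - \log \mu_j(x_i) }.
\]
Because $H(X) = \sum_{j \leq k} w_j H(\mu_j)$ and $\sum_j w_j = 1$, the deviation $\Abs{ -\frac{1}{N}\log P_X(x) - H(X)}$ is bounded by a convex combination of the $k$ per-source deviations $\Abs{ \frac{1}{\Abs{K_j}} \sum_{i \in K_j} (-\log \mu_j(x_i)) - H(\mu_j)}$. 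It therefore suffices to control each of these separately and then take the worst case.

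First I would fix $j$ and observe that the variables $Y_i = -\log \mu_j(X_i)$ for $i \in K_j$ are i.i.d.\ with mean exactly $H(\mu_j)$, since the $X_i$ with $i \in K_j$ are independent samples from $\mu_j$. Because $\groundX$ is finite and $\Set{\mu_j}$ is fixed, every symbol in $\supp \mu_j$ has probability at least some constant $\mu_{\min} > 0$, so each $Y_i$ lies in $[0, -\log \mu_{\min}]$ almost surely; any sequence placing some $x_i$ outside $\supp \mu_{\kappa(i)}$ has $P_X(x)=0$ and will be excluded automatically below. Note that, in contrast to Lemma \ref{lem:empirical_mx_bound}, no odd/even pairing is required here, because we examine individual symbols rather than consecutive pairs, so the samples within each $K_j$ are directly independent. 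Applying Hoeffding's inequality to these bounded i.i.d.\ variables gives, for every $\eps' > 0$,
\[
\Prob{ \Abs{ \frac{1}{\Abs{K_j}} \sum_{i \in K_j} Y_i - H(\mu_j) } \geq \eps' } \leq 2 \exponent{ - \frac{2 \tilde{w} N \eps'^2}{(\log \mu_{\min})^2} },
\]
where I have used $\Abs{K_j} = w_j N \geq \tilde{w} N$ in the exponent.

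Next I would define the good set $G \subset \groundX^N$ to consist of those $x$ whose empirical per-source averages lie within $\eps'$ of $H(\mu_j)$ for all $j \leq k$ simultaneously. On $G$ the convex-combination bound yields $\Abs{ -\frac{1}{N}\log P_X(x) - H(X)} \leq \eps'$, which is exactly (\ref{eq:xl_close_to_exp}); and since $G$ contains only full-support sequences, $P_X(x) > 0$ there and the expression is finite. A union bound over the $k$ sources gives
\[
\sum_{x \in G} P_X(x) = \Probu{X}{G} \geq 1 - 2k \exponent{ - \frac{2 \tilde{w} N \eps'^2}{(\log \mu_{\min})^2} },
\]
which is (\ref{eq:xl_full_prob}).

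Finally I would balance the two error terms by letting $\eps'$ shrink slowly with $N$, for instance $\eps'(N) = (\tilde{w} N)^{-1/4}$, so that $\eps'(N) \to 0$ while $\tilde{w} N \eps'(N)^2 = (\tilde{w} N)^{1/2} \to \infty$; then both the deviation and the failure probability tend to $0$ as $\tilde{w} N \to \infty$, and taking $\eps_N$ to be the maximum of the two quantities completes the proof. This argument is essentially routine concentration, and I do not expect a genuine obstacle. The only points that require care are the boundedness of $-\log \mu_j$, which relies on $\Set{\mu_j}$ being fixed so that $\eps_N$ is permitted to depend on these distributions, and the appearance of $\tilde{w} N$ rather than $N$ in the exponent, which correctly encodes the fact that the guarantee degrades whenever some source is underrepresented in $\Set{1,\ldots,N}$.
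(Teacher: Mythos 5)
Your proof is correct and is essentially the paper's own argument made explicit: the paper omits the proof, stating only that the lemma is an AEP for independent (non-identically distributed) variables and follows from a weak law of large numbers, which is precisely your per-source decomposition $-\frac{1}{N}\log P_X(x) = \sum_j w_j \cdot \frac{1}{|K_j|}\sum_{i \in K_j}(-\log \mu_j(x_i))$ plus per-source concentration. Your substitution of Hoeffding's inequality for the weak LLN is a mild quantitative strengthening (explicit exponential rates in $\tilde{w}N$) rather than a different route, and all the supporting details (boundedness of $-\log\mu_j$ on its support, the union bound over $k$ sources, and balancing $\eps'$ against the failure probability) are handled correctly.
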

This Lemma is a version of an Asymptotic Equipartition Property (AEP) for independent variables (see 
\cite{coverthomas}). Statements (\ref{eq:xl_close_to_exp}) and (\ref{eq:xl_full_prob}) follow from a weak 
law of large numbers. The details of the proof are identical to the standard AEP and are omitted.

Note that on one hand sequences $x^l$ are a set of almost full 
probability by (\ref{eq:xl_full_prob}), and on the other hand, by 
Lemma \ref{lem:empirical_mx_bound}, $\Probu{X}{M(x) \in U}$ is 
also close to $1$. Denote 
\begin{equation}
X^l = G \cap O.
\end{equation}
It follows that 
\begin{equation}
\label{eq:xl_xl_prob}
\Probu{X}{X^l} \geq 1 - \eps_N,
\end{equation}
perhaps with a slightly different $\eps_N$. In addition, similarly to the standard AEP, we can obtain 
cardinality estimates on $\Abs{X^l}$.
Indeed, combining (\ref{eq:xl_xl_prob}) and (\ref{eq:xl_close_to_exp}) we get 
\begin{equation}
\label{eq:xl_cardinality}
2^{N (H(X) + \eps_N)} \geq  \Abs{X^l} \geq 2^{N (H(X) - \eps_N)}.
\end{equation}

Next, using  (\ref{eq:entropy_hmm_bound}) we 
can write  
\begin{equation}
 \sum_{x^l \in X^l} \Probu{H}{x^l} \leq \Probu{H}{ O} \leq 2^{-N D^2},
\end{equation}
or equivalently, 
\begin{equation}
\label{eq:average_statement_intermediate}
\frac{1}{\Abs{X^l}} \sum_{l} \Probu{H}{x^l} \leq 
2^{-N \Brack{D^2 + \frac{1}{N}\log |X^l|}}. 
\end{equation}
Using (\ref{eq:xl_cardinality}) we therefore obtain 
\begin{equation}
\label{eq:average_statement}
\frac{1}{\Abs{X^l}} \sum_{l} \Probu{H}{x^l} \leq 
2^{-N \Brack{D^2 + \sum_{i \leq k} w_i H(\mu_i) - \eps_N }}. 
\end{equation}

Note that (\ref{eq:average_statement}) is essentially the statement 
of Theorem \ref{thm:main_thm} on average over $x^l$. 
By applying  Markov inequality to this average we get that the proportion of 
$x \in X_l$ which satisfy 
\begin{equation}
\label{eq:x_l_likelihood}
L(x,H) \geq -D^2/2 - \sum_j w_j H(\mu_j) - \eps_N
\end{equation}
is at most $2^{-N \frac{D^2}{2}}$. Formally, 
\begin{equation}
\label{eq:x_l_proportion_statement}
\frac{\Abs{\Set{x \in X^l \setsep x \mbox{ satisfies (\ref{eq:x_l_likelihood})}}} }{ \Abs{X^l} } \leq 
2^{-N \frac{D^2}{2}}.
\end{equation}
Using (\ref{eq:xl_close_to_exp}) and (\ref{eq:xl_cardinality}) again,
this estimate implies a probability estimate over $X$, 
\begin{align}
\label{eq:proof_final_markov}
&\Probu{X}{ L(x,H) \geq -D^2/2 - \sum_j w_j H(\mu_j) + \eps_N} \\
&\leq 2^{-N \Brack{\frac{D^2}{2} - 2\eps_N}} + \eps_N, \nonumber
\end{align}
therefore concluding the proof of Theorem 
\ref{thm:main_thm}. 

It remains to prove the bound (\ref{eq:entropy_hmm_bound}).  We begin with a standard construction 
for transforming an HMM into a Markov chain in a special form. This  
converts the problem of bounding the likelihood of data under an HMM to a problem 
of bounding a 
likelihood of a certain set of paths in the chain.  Given an 
HMM $H = H(S,\{\nu_i)\}_{1}^k, \{p_{ij}\}_{i,j=1}^k)$,
construct a Markov chain $H' = (S',p')$ with state space $S' = 
S\times 
\groundX$, 
and transition probabilities 
\begin{equation}
	p'_{(i,a),(j,b)} = p_{ij} \nu_j(b).
\end{equation}
For a state $(i,a) \in S'$, we refer to $a$ as the data component of the 
state. Clearly, by observing a random walk of $H'$ and looking 
only at the data component, we get a distribution 
over the data that is identical to that of the HMM. Note that for a 
single data vector $x = (x_1,\ldots,x_{N+1})$, there are exactly 
$k^{N+1}$ paths of the chain $H'$ yielding the data $x$. 

Next, we use type theory 
for Markov chains to obtain deviation bounds on the empirical second 
moment of a random walk. Similarly to second moment of the data, for 
a Markov chain $H' = (S',p')$, and a path $s = s_1,s_2,
\ldots,s_{N+1}$, where $s_i \in S'$, define the second moment $M(s) \in \Delta_{S' \times S'}$ by
\begin{align}
&M(s)(u,v) = \\
&\frac{1}{N} \Abs{
\Set{ i \leq N \setsep s_{i} = u \wedge s_{i+1} = v} 
},
\end{align}
for all $u,v \in S'$. For a subset $\Pi \subset \Delta_{S' \times S'}$, the second order type theory provides bounds of the form 
\begin{equation}
\label{eq:type_demo}
\Probu{H'}{ M(s) \in \Pi } \leq 2^{-N\cdot D}, 
\end{equation}
where $D$ is a suitably defined distance between the set $\Pi$ and 
the transition matrix $p'$. Statement (\ref{eq:type_demo}) is a 
Markov chain analog of Sanov's theorem for i.i.d sequences 
(\cite{sanov}, \cite{coverthomas}). We use a second moment deviation 
inequality due to \cite{ccc87}, stated as Lemma 
\ref{lem:sanov}. Note that type theory provides estimates on moments 
of paths of the chain $H'$, which take values in $\Delta_{S' \times S'}$, while our  
assumptions are about moments of the data, $M(x) \in 
\Delta_{\groundX \times \groundX}$. We now describe the connection between the two types of moments. 
Consider a Markov chain $H'=(S',p')$ corresponding to an HMM $H$. Define 
a linear map $T : \Delta_{S' \times S'} \rightarrow \Delta_{\groundX 
\times \groundX}$ by 
\begin{equation}
T(M')(a,b) = \sum_{i,j \leq k} M'((i,a),(j,b)).
\end{equation}
If $M'$ is the second moment of a path of the chain, then $T(M')$
is the second moment of the data.  The map $T$ satisfies the following inequality, which is crucial 
for our analysis. 
\begin{lem}
For any $M_1,M_2 \in \Delta_{S' \times S'}$, 
\begin{equation}
\label{eq:T_contraction}
  D(T(M_1)|T(M_2)) \leq D(M_1 | M_2). 
\end{equation}
\end{lem}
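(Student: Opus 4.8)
The plan is to recognize inequality (\ref{eq:T_contraction}) as an instance of the data-processing (monotonicity) inequality for relative entropy. The key observation is that $T$ is nothing but the pushforward of measures under the projection $f : S' \times S' \rightarrow \groundX \times \groundX$ that forgets the state components, $f\big((i,a),(j,b)\big) = (a,b)$. Indeed, by the definition $T(M')(a,b) = \sum_{i,j \leq k} M'\big((i,a),(j,b)\big)$, the quantity $T(M')(a,b)$ is exactly the total mass that $M'$ assigns to the fiber $f^{-1}(a,b)$. Since relative entropy never increases under such a coarse-graining, the inequality should follow, and the concrete elementary tool I would use to make this precise is the log-sum inequality (a consequence of the convexity of $t \mapsto t\log t$, see \cite{coverthomas}).

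First I would expand the divergence on the fine space and group the summation according to the fibers of $f$, that is, according to the data components $(a,b)$:
\[
D(M_1|M_2) = \sum_{a,b \in \groundX} \sum_{i,j \leq k} M_1\big((i,a),(j,b)\big) \log \frac{M_1\big((i,a),(j,b)\big)}{M_2\big((i,a),(j,b)\big)}.
\]
Next, for each fixed pair $(a,b)$ I would apply the log-sum inequality to the inner sum over $i,j$. Recall that the log-sum inequality states that for non-negative numbers $\Set{\alpha_s}$ and $\Set{\beta_s}$ one has $\sum_s \alpha_s \log(\alpha_s/\beta_s) \geq \big(\sum_s \alpha_s\big)\log\big((\sum_s \alpha_s)/(\sum_s \beta_s)\big)$. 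Taking $\alpha_{(i,j)} = M_1\big((i,a),(j,b)\big)$ and $\beta_{(i,j)} = M_2\big((i,a),(j,b)\big)$, and using that $\sum_{i,j} \alpha_{(i,j)} = T(M_1)(a,b)$ and $\sum_{i,j}\beta_{(i,j)} = T(M_2)(a,b)$, the inner sum is bounded below by $T(M_1)(a,b)\log\frac{T(M_1)(a,b)}{T(M_2)(a,b)}$. Summing this bound over all $(a,b) \in \groundX \times \groundX$ reassembles exactly $D(T(M_1)|T(M_2))$, which yields (\ref{eq:T_contraction}).

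The argument is essentially routine and I do not expect a serious obstacle; the only point requiring care is the treatment of the degenerate fibers, namely those where $M_2$ (or, after aggregation, $T(M_2)$) vanishes. These are handled by the standard conventions $0\log\frac{0}{0} = 0$ and $p\log\frac{p}{0} = +\infty$ for $p>0$, under which the log-sum inequality remains valid. In particular, whenever $D(M_1|M_2)$ is finite we necessarily have $M_1 \ll M_2$ fiberwise, so the right-hand side $D(T(M_1)|T(M_2))$ is also finite and the comparison is meaningful; and when $D(M_1|M_2) = +\infty$ the inequality holds trivially. Thus the only genuine content is the single application of the log-sum inequality, and I would present it in exactly the grouped form above.
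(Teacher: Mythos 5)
Your proof is correct, and it shares the paper's structural starting point: $T$ is the pushforward under the projection $f\big((i,a),(j,b)\big)=(a,b)$ that forgets the state components, so the claim is monotonicity of relative entropy under coarse-graining. Where you diverge is in the finishing tool. The paper writes each $v=((i,a),(j,b))$ as a pair $(u,w)$ with state part $u=(i,j)$ and data part $w=(a,b)$, lets $V_t=(U_t,W_t)\sim M_t$, and invokes the chain rule for relative entropies (\cite{coverthomas}, eq.\ (2.67)),
\begin{equation*}
D(M_1|M_2) = D(T(M_1)|T(M_2)) + \sum_{a,b \in \groundX} T(M_1)(a,b)\, D\Brack{ \BrackSq{U_1 \mid W_1=(a,b)} \mid \BrackSq{U_2 \mid W_2=(a,b)} },
\end{equation*}
then drops the non-negative conditional term; you instead apply the log-sum inequality fiber by fiber and sum over $(a,b)$. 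The two arguments are essentially equivalent in content --- the slack in the log-sum inequality on the fiber over $(a,b)$, weighted and summed, is exactly the conditional-divergence term the paper discards --- but they trade off differently. Your route is more elementary (nothing beyond convexity of $t\mapsto t\log t$) and is cleaner on degenerate fibers: where $M_2$ assigns zero mass, the conditional distributions appearing in the chain rule are ill-defined, whereas the log-sum inequality under the conventions $0\log\frac{0}{0}=0$ and $p\log\frac{p}{0}=+\infty$ goes through verbatim, a point you address explicitly and the paper does not. The paper's route is shorter on the page, since the decomposition is a citable textbook identity and the inequality is then immediate from non-negativity of relative entropy.
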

\begin{proof} 
This result is a consequence of the chain rule for relative entropies. 
To see this, 
represent an element $v = ((i,a),(j,b)) \in S' \times S'$ as a pair $v = (u,w)$ where $u = (i,j)$ and 
$w=(a,b)$ are the state and data parts of $v$. Then $M \in \Delta_{S' \times S'}$ is a distribution 
over all $(u,w)$. Denote by $V = (U,W)$ the random vector with values in $S' \times S'$ and distribution 
$M$. Then, by definition, $T(M)$ is the marginal distribution of component $W$ of $V$. 
By the chain rule for relative entropies (\cite{coverthomas}, equation (2.67) ),
\begin{align}
\label{eq:T_contraction_proof}
&D(M_1,M_2) = \\
&D( (U_1,W_1) | (U_2,W_2)) = \nonumber \\
& D( W_1 | W_2) +  \nonumber \\
& \spaceo +\sum_{a,b \in \groundX} M_1(a,b) \cdot D\Brack{ \BrackSq{ U_1 | W_1 = (a,b)} | \BrackSq{ U_2 | W_2 = (a,b)} } \nonumber \\
& \geq D( W_1 | W_2) \nonumber \\ 
& = D(T(M_1)|T(M_2)). \nonumber  
\end{align}
where the inequality in (\ref{eq:T_contraction_proof}) is due to the non-negativity of relative entropy. 
\end{proof}

To state the deviation result, Lemma \ref{lem:sanov}, we require some 
additional notation. For any measure $M \in \sqmeas$, define the left and right marginalizations 
$\bar{M}, \dbar{M} \in \sinmeas$  by 
\begin{equation}
\bar{M}(u) = \sum_{v\in S'} M(u,v) ,\spaceo  
\dbar{M}(u) = \sum_{v\in S'} M(v,u).
\end{equation}
Moreover, given $M \in \sqmeas$, define the related transition 
matrix to be 
\begin{equation}
M(v|u) = \frac{M(u,v)}{\bar{M}(u)}.
\end{equation}
A measure $M \in \sqmeas$ is called stationary, if $\bar{M} = 
\dbar{M}$. Such measure is a stationary measure of a random walk 
given by the transition matrix $M(v|u)$. We denote by $\sqmeas^0$ 
the set of all stationary measures. 

Finally, we introduce a quantity that will control the deviations of 
moments. Given a transition matrix $p'=p'_{uv}$ and a measure 
$M \in \sqmeas$, define 
\begin{eqnarray}
\label{eq:dmp_def_1}
D(M | p' ) = & \sum_{u,v \in S'} M(u,v) \log \frac{M(v|u)}{p'_{uv}} = \\
\label{eq:dmp_def_2}
&\sum_{u,v \in S'} M(u,v) \log \frac{M(u,v)}{\bar{M}(u)p'_{uv}}.
\end{eqnarray}

The quantity $D(M|p')$ differs from the standard Kullback-
Leibler divergence since $p'$ is not a measure. However, as follows 
from (\ref{eq:dmp_def_2}), we can write
$D(M|p') = D(M|z)$, where $D(t|z)$ is the standard KL divergence and 
$z\in \sqmeas$ is defined by $z(u,v) = \bar{M}(u) \cdot p'_{uv}$.

For any closed set $\Pi \subset \sqmeas$, denote $\Pi_0 = \sqmeas^0 \cap \Pi$. 

\begin{lem}[\cite{ccc87}]
\label{lem:sanov}
Let  $\Pi \subset \sqmeas$ be  a closed convex set. For any $D' > 0$ there is a sequence 
$\eps_N$ with $\lim_{N \rightarrow \infty} \eps_N = 0$ such that for any Markov chain 
 $C = (S',p')$ satisfying (\ref{eq:ccc_condition}),
\begin{equation}
\label{eq:ccc_condition}
D'  = \min_{M \in \Pi_0} D(M|p'),
\end{equation}  
if $X = X_1,\ldots,X_{N+1}$ is a random walk generated by $C$ then 
\begin{equation}
\Probu{C}{ M(X) \in \Pi} \leq 2^{-N \Brack{D' - \eps_N}}.
\end{equation}
\end{lem}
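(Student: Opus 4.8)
The plan is to prove the bound by the method of types for Markov chains, i.e.\ the second-moment analogue of Sanov's theorem. The first step is to rewrite the probability of a trajectory purely through its pair type. Writing $N_{uv}(s)$ for the number of indices $i\leq N$ with $s_i=u$ and $s_{i+1}=v$, so that $M(s)(u,v)=N_{uv}(s)/N$, the Markov property gives
\begin{equation}
\Probu{C}{s}=\pi(s_1)\prod_{u,v\in S'}\Brack{p'_{uv}}^{N_{uv}(s)},
\end{equation}
and therefore $\log\Probu{C}{s}=\log\pi(s_1)+N\sum_{u,v}M(s)(u,v)\log p'_{uv}$. In particular every trajectory with a given pair type $M$ has the same probability up to the bounded factor $\pi(s_1)\leq 1$.

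The heart of the argument is then two combinatorial estimates. First, the number of distinct pair types realizable by trajectories of length $N+1$ is at most $(N+1)^{\Abs{S'}^2}$, since each entry $N_{uv}$ is an integer in $\Set{0,\ldots,N}$; this polynomial factor is exactly what will be absorbed into $\eps_N$. Second, for a fixed realizable type $M$, the number of trajectories realizing it equals $2^{N H(M)}$ up to a factor polynomial in $N$, where $H(M)=-\sum_{u,v}M(u,v)\log M(v|u)$ is the conditional entropy attached to $M$. Combining both estimates with the path-probability formula and the identity $D(M|p')=-H(M)-\sum_{u,v}M(u,v)\log p'_{uv}$, which is immediate from $(\ref{eq:dmp_def_1})$, yields the per-type bound
\begin{equation}
\Probu{C}{M(X)=M}\leq (N+1)^{\Abs{S'}^2}\,2^{N\Brack{H(M)+\sum_{u,v}M(u,v)\log p'_{uv}}}=(N+1)^{\Abs{S'}^2}\,2^{-N D(M|p')}.
\end{equation}

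Two points remain. The minimum in $(\ref{eq:ccc_condition})$ is taken over the stationary set $\Pi_0=\sqmeas^0\cap\Pi$ rather than all of $\Pi$, and this is justified by a near-stationarity observation: for any trajectory the marginals $\bar{M}(s)$ and $\dbar{M}(s)$ count transitions out of and into each state and can differ only through the two endpoints, so $\norm{\bar{M}(s)-\dbar{M}(s)}_{TV}=O(1/N)$. Hence every realizable type lies within $O(1/N)$ of $\sqmeas^0$, and since $\Pi$ is closed and convex and $M\mapsto D(M|p')$ is continuous, the minimum over realizable types in $\Pi$ differs from $D'=\min_{\Pi_0}D(M|p')$ by a quantity tending to $0$. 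Finally, summing the per-type bound over the at most $(N+1)^{\Abs{S'}^2}$ types contained in $\Pi$ gives
\begin{equation}
\Probu{C}{M(X)\in\Pi}\leq (N+1)^{2\Abs{S'}^2}\,2^{-N D'}\leq 2^{-N\Brack{D'-\eps_N}},
\end{equation}
with $\eps_N=\tfrac{2\Abs{S'}^2}{N}\log(N+1)+o(1)\to 0$.

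The main obstacle is the second counting estimate. It is the genuinely Markovian ingredient: one must count walks in a directed multigraph with prescribed edge multiplicities $N_{uv}$. The leading $2^{N H(M)}$ comes from a Stirling analysis of the in/out-degree factorials at each vertex, but this must be corrected by an Eulerian-connectivity factor of Whittle / BEST-theorem type, and the delicate part is to show that this correction is only polynomial in $N$ uniformly over realizable types, and to dovetail it with the near-stationarity estimate so that the minimizing measure may indeed be taken in $\Pi_0$. Once these counts are in hand the remaining steps are routine.
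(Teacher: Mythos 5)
The paper never proves Lemma \ref{lem:sanov}: it is imported from \cite{ccc87} (hence the bracketed citation in the lemma header) and is only \emph{applied} in Section \ref{sec:proofs}. So there is no internal proof to compare against; what you have done is reconstruct the standard method-of-types argument behind the cited result, and your skeleton --- path probability determined by the pair type, polynomially many types, a per-type exponential bound, and a reduction of the minimization from realizable types to the stationary set $\Pi_0$ --- is the right one.

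Two corrections, one in each direction. The step you single out as the main obstacle is not an obstacle for this lemma: since only an \emph{upper} bound is claimed, you only need $\Abs{\Set{s \setsep M(s)=M,\ s_1=a}} \leq 2^{N H(M)}$, and this requires no Whittle/BEST connectivity correction. Under the auxiliary Markov chain with transition matrix $M(v|u)$ started at $a$, every such trajectory has probability exactly $2^{-N H(M)}$, and these probabilities sum to at most $1$; equivalently, overcount by the product over $u$ of multinomial coefficients. The Eulerian-connectivity cofactor matters only for the matching \emph{lower} bound, which is not asserted here. Conversely, the step you call routine is where the real content sits. First, $M \mapsto D(M|p')$ is not continuous --- it jumps to $+\infty$ when $M$ charges a pair $(u,v)$ with $p'_{uv}=0$ --- but only lower semicontinuous, so your ``near-stationarity plus continuity'' sentence should be a compactness argument: if realizable types $M_N \in \Pi$ satisfied $D(M_N|p') < D' - \delta$ along a subsequence, then using $\norm{\bar{M}_N - \dbar{M}_N}_{TV} \leq 2/N$ and closedness of $\Pi$ and of $\sqmeas^0$, a subsequential limit would lie in $\Pi_0$, and lower semicontinuity would force $\liminf_N D(M_N|p') \geq D'$, a contradiction. (Convexity of $\Pi$, which you invoke, plays no role in this direction.) Second, and more seriously, this argument produces an error term $o(1)$ that depends on the chain $p'$, whereas the lemma quantifies $\eps_N$ \emph{before} the chain: one sequence $\eps_N$ must work for every $C$ satisfying (\ref{eq:ccc_condition}). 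That uniformity is not cosmetic --- the paper leans on it when the bound is applied simultaneously to all HMMs in a $1/N$-net of $\mathcal{H}_{\delta}$ in the proof of Theorem \ref{thm:full_statement}. Closing it requires a joint compactness/semicontinuity argument in $(M,p')$ over the compact set of transition matrices, or simply quoting \cite{ccc87}. As written, your proof gives the bound for each fixed chain with a chain-dependent $\eps_N$, which is weaker than the statement.
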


Lemma \ref{lem:sanov} provides us with likelihood estimates 
that depend on the parameters of the unfolded Markov chain $H'$. 
To obtain the bound (\ref{eq:entropy_hmm_bound}) for an HMM $H$ we 
apply Lemma \ref{lem:sanov} to the Markov chain $H'$ with the  set 
$\Pi \subset \sqmeas$ given by
\begin{equation}
\Pi = T^{-1}(U) = \Set{M' \setsep T(M') \in U},
\end{equation}
where $U_{\eps}$ was defined in (\ref{eq:u_eps_def}). 

Choose some $M' \in \Pi$ and denote $M = T(M')$.
In what follows we show that 
if $D$ is given by (\ref{eq:m_x_m_h_dist}), then 
\begin{equation}
\label{eq:deviation_application_base}
 D(T(M')|T(\bar{M'} p' )) \geq 2 D^2. 
\end{equation} 
Note that by (\ref{eq:T_contraction}) this implies 
\begin{equation}
 D(M'|\bar{M'} p' ) \geq 2 D^2, 
\end{equation} 
and hence $D' \geq 2 D^2$ in Lemma \ref{lem:sanov}, therefore proving 
(\ref{eq:entropy_hmm_bound}). 

Next, to obtain (\ref{eq:deviation_application_base}), observe that 
by Pinsker's Inequality (see \cite{coverthomas} 
\footnote{Pinsker's Inequality: 
$2\norm{\mu - \nu}_{TV}^2 \leq D(\mu|\nu)$ for all measures $\mu,\nu$.
} 
), it is 
sufficient to show that 
\begin{equation}
\label{eq:deviation_application_tv}
  \norm{T(M') - T(\bar{M'} p') }_{TV}  \geq D. 
\end{equation} 
Recall that by definition $T(M') \in U$, and hence 
\begin{equation}
  \norm{T(M') - M_X }_{TV}  \leq \frac{3}{m}. 
\end{equation} 
Thus to obtain (\ref{eq:deviation_application_tv}) it is sufficient 
to show that 
\begin{equation}
\label{eq:deviation_application_tv2}
  \norm{M_X - T(\bar{M'} p') }_{TV}  \geq D + \frac{3}{m}. 
\end{equation} 
Let us now write the explicit expression for $T(\bar{M'} p')$. 
\begin{eqnarray}
T(\bar{M'} p' )(a,b) = 
\sum_{i,j \leq k} \bar{M'}((i,a)) p'_{(i,a),(j,b)} = \\
\sum_{i,j \leq k} \bar{M'}((i,a)) p_{ij} \mu_j(b). \label{eq:tmfull}
\end{eqnarray}
In addition, observe that by definition, with the notation $M = T(M')$,
\begin{eqnarray}
\sum_{i \leq k} \bar{M'}((i,a)) = \\
\sum_{i \leq k} \sum_{j \leq k} \sum_{b \in \groundX} M'((i,a),(j,b)) 
= \\
\sum_{b \in \groundX} M(a,b) = \bar{M}(a). \label{eq:mprimembarmarg}
\end{eqnarray}
For every $a,b \in \groundX$, denote  $\phi_a = (M'(1,a),\ldots,M'(k,a)) \in \RR^k$, and $\chi_b = (\nu_1(b),
\ldots,\nu_k(b))$. Then we can rewrite (\ref{eq:tmfull}) in 
the generalized second moment form (\ref{eq:generalized_moment_def}) as
\begin{equation}
T(\bar{M'} p' )(a,b) = \phi_a \cdot p \cdot \chi_b.
\end{equation}
Moreover, since $M = T(M') \in U$, the marginals satisfy 
\begin{equation}
\label{eq:marginals_ph_a_condition}
\norm{\bar{M_X} - \bar{M}}_{TV} \leq \norm{M_X - M}_{TV} \leq \frac{3}{m}.
\end{equation}
Therefore, by (\ref{eq:mprimembarmarg}) and (\ref{eq:marginals_ph_a_condition})  the condition (\ref{eq:main_condition_on_phi_a}) for $\phi_a$ in the main Theorem  
holds.

\subsection{Proof of Theorem \ref{thm:full_statement}}
\label{sec:proof_of_full_statement}
\newcommand{\Hdelta}{\mathcal{H}_{\delta}}
\begin{proof}[Proof of Theorem \ref{thm:full_statement}]
Set $D_0 = \sqrt{\frac{\log 3km}{m}}$.
To obtain Theorem \ref{thm:full_statement} it suffices to show that 
with high probability over $X$, 
\begin{align}
\label{eq:full_statement_proof_likelihood}
L(x,H,\pi) \leq & -D_0^2 - \sum_{j} w_j H(\mu_j) + \eps(\Nmin) \\
= &	-\frac{\log 3km}{m} - \sum_{j} w_j H(\mu_j) + \eps(\Nmin) \nonumber
\end{align}
jointly for all HMMs $H \in \mathcal{H}_{\delta}$ which satisfy 
\begin{equation}
\label{eq:full_statement_big_d}
D(H) > \sqrt{\frac{\log 3km}{m}}.
\end{equation}
Indeed, assume that (\ref{eq:full_statement_proof_likelihood}) holds for all $H$ which satisfy
 (\ref{eq:full_statement_big_d}). Then, since by Lemma \ref{lem:right_path_full} we know that there exists an HMM $H_0 \in \mathcal{H}_{\delta}$ such that 
\begin{equation}
\label{eq:full_statement_proof_H_0_likelihood}
L(x,H_0,\pi) > -\frac{\log 3km}{m} - \sum_{j} w_j H(\mu_j) - \eps(\Nmin),
\end{equation}
it follows that the maximum likelihood estimator $H$ must satisfy
\begin{equation}
D(H) \leq \sqrt{\frac{\log 3km}{m}}.
\end{equation}
Note that for a single fixed HMM $H$ satisfying (\ref{eq:full_statement_big_d}), the statement 
(\ref{eq:full_statement_proof_likelihood}) holds by Theorem \ref{thm:main_thm} with high probability.
However, since we would like to have explicit exponential probability bounds, we work 
directly with estimate \ref{eq:x_l_proportion_statement} in the proof of Theorem \ref{thm:main_thm} rather 
than with the final statement of that theorem. Then, for a fixed $H$, the probability over $X^l$ of 
$x$ not satisfying (\ref{eq:full_statement_proof_likelihood}) is at most $2^{-N \frac{D_0^2}{2}}$.
The uniform statement for all $H \in \mathcal{H}_{\delta}$ satisfying 
(\ref{eq:full_statement_big_d}) can be obtained by approximation and union bound. We first define an appropriate metric on $\Hdelta$.
Consider the set $\mathcal{H}_{\delta}$ as a subset of the Euclidean space $\RR^F$, 
where $F = k^2 + k\cdot \Abs{\groundX}$ and we simply consider the parameters of an HMM as coordinates. 
For any $H \in \mathcal{H}_{\delta}$ let $v(H) = (v_t(H))_{t=1}^F \in \RR^F$ be the vector corresponding to $H$. In what follows we identify $\Hdelta$ with a subset of $\RR^F$, $\Set{v(H) \setsep H \in \Hdelta} \subset \RR^F$.
By definition we have for every $H \in \Hdelta$,
\begin{equation}
\label{eq:full_statement_v_t_delta}
 v_t(H) \geq \delta \spaceo \forall t \leq F.
\end{equation}
Define a map $R : \RR^F \mapsto \RR^F$ by 
\begin{equation}
R(v) = (\log v_1, \ldots, \log v_F)
\end{equation}
and define a metric on $\Hdelta$ by 
\begin{align}
d_{*}(H_1,H_2) =& \norm{R(v(H_1)) - R(v(H_2))}_{\infty} \\
 =& \max_{t\leq F} \Abs{\log \frac{v_t(H_1)}{v_t(H_2)} }. \nonumber
\end{align}
Next, for $\gamma >0$ let $\Gamma_{\gamma}$ be the minimal cardinality of a $\gamma$-net of $\Hdelta$ with 
respect to the metric $d_{*}$. Since $\Hdelta$ is bounded in $\RR^F$, the map $R$ is coordinate-wise at 
most $\frac{1}{\delta}$-Lipschitz on $\Hdelta$ (by (\ref{eq:full_statement_v_t_delta})), and $d_{*}$ is an $\ell_{\infty}$ norm on the image of $R$,
standard volumetric arguments imply that 
\begin{equation}
\Gamma_{\gamma} \leq c \Brack{\frac{1}{\gamma \cdot \delta} }^F. 
\end{equation}
It is also easy to check that the normalized log-likelihood is $1$-Lipschitz with respect 
to $d_{*}$:
\begin{equation}
\Abs{L(x,H_1,\pi) - L(x,H_2,\pi)} \leq d_{*}(H_1,H_2),
\end{equation}
for every sequence $x$ and distribution $\pi$. 

Consider an $1/N$-net in $\Hdelta$. 
As noted above, for an individual HMM $H$ satisfying (\ref{eq:full_statement_big_d}), the statement 
(\ref{eq:full_statement_proof_likelihood}) holds with probability at least $1 - 2^{-N \frac{D_0^2}{2}}$
over $X^l$. 
Therefore with probability at least 
\begin{equation}
\label{eq:full_statement_prob_final}
1 - 2^{-N \frac{D_0^2}{2}} \cdot 2^ {F \log \frac{1}{N \cdot \delta}},
\end{equation}
we have 
\begin{equation}
\label{eq:full_statement_likelihood_after_lip}
L(x,H,\pi) \leq -D_0^2 - \sum_{j} w_j H(\mu_j) + \eps(\Nmin) + \frac{1}{N}. 
\end{equation}
It remains to observe that for any $N$ such that 
\begin{equation}
N \geq  2\frac{F \log \frac{1}{N \cdot \delta}}{D_0^2},
\end{equation}
the probability in (\ref{eq:full_statement_prob_final}) is positive, and approaches 1 for larger $N$, thereby completing the proof. 
\end{proof}

\bibliographystyle{apalike}
\bibliography{hmm_stability}

\end{document}